\documentclass[sigconf,screen]{acmart}
\settopmatter{printacmref=false, printfolios=true}
\AtBeginDocument{%
  }




\usepackage{amsmath}
\usepackage{amsthm}
\usepackage{amsfonts}
\usepackage{thm-restate}
\usepackage{enumitem}
\usepackage{subcaption}

\usepackage[linesnumbered, boxed, vlined]{algorithm2e}
\DontPrintSemicolon
\SetKwInOut{Input}{Input}
\SetKwInOut{Output}{Output}
\SetKwProg{function}{Function}{:}{}
\SetKwProg{parfor}{parallel for}{do}{}

\usepackage{algorithmic}

\usepackage{graphicx}
\usepackage{tabularx}
\usepackage{multirow}
\usepackage{xspace}
\usepackage[capitalise]{cleveref}



\theoremstyle{plain}
\newtheorem{theorem}{Theorem}[section]

\newtheorem{problem}{Problem}
\newtheorem{definition}{Definition}[section]

\newtheorem{question}{Question}

\makeatletter
\newcommand{\multiline}[1]{%
  \begin{tabularx}{\dimexpr\linewidth-\ALG@thistlm}[t]{@{}X@{}}
    #1
  \end{tabularx}
}
\makeatother

\newcommand{\myparagraph}[1]{\vspace{.03in}\noindent {\bf #1.}}

\newcommand{\sldmerge}{\textsc{SLD-Merge}\xspace}

\newcommand{\ins}{\textsc{Insert}}
\newcommand{\del}{\textsc{Delete}}

\newcommand{\filter}{\textsc{Filter}}
\newcommand{\dynsld}{\textsc{DynSLD}\xspace}

\newcommand{\spine}{\mathsf{Spine}\xspace}

\def \tO {\tilde{O}\xspace}

\newcommand{\defn}[1]{{\textit{\textbf{\boldmath #1}}}}

\newcounter{mnote}[section]

\newcommand{\revision}[1]{#1}
\begin{document}
\fancyhead{}

\title{Fully-Dynamic Parallel Algorithms for Single-Linkage Clustering}

\author{Quinten De Man}
\affiliation{
    \institution{University of Maryland}
    \city{College Park}
    \state{MD}
    \country{USA}
}
\email{deman@umd.edu}

\author{Laxman Dhulipala}
\affiliation{
    \institution{University of Maryland}
   \city{College Park}
   \state{MD}
   \country{USA}
}
\email{laxman@umd.edu}

\author{Kishen N Gowda}
\affiliation{
    \institution{University of Maryland}
   \city{College Park}
   \state{MD}
   \country{USA}
}
\email{kishen19@cs.umd.edu}

\begin{abstract}
Single-linkage clustering is a popular form of hierarchical agglomerative clustering (HAC) where the distance between two clusters is defined as the minimum distance between any pair of points across the two clusters. In single-linkage HAC, the output is typically the single-linkage dendrogram (SLD), which is the binary tree representing the hierarchy of clusters formed by iteratively contracting the two closest clusters. In the dynamic setting, prior work has only studied maintaining a minimum spanning forest over the data since single-linkage HAC reduces to computing the SLD on the minimum spanning forest of the data.

In this paper, we study the problem of maintaining the SLD in the fully-dynamic setting. We assume the input is a dynamic forest $F$ (representing the minimum spanning forest of the data) which receives a sequence of edge insertions and edge deletions. To our knowledge, no prior work has provided algorithms to update an SLD asymptotically faster than recomputing it from scratch. All of our update algorithms are asymptotically faster than the best known static SLD computation algorithm, which takes $O(n \log h)$ time where $h$ is the height of the dendrogram ($h \leq n-1$). Furthermore, our algorithms are much faster in many cases, such as when $h$ is low. Our first set of results are an insertion algorithm in $O(h)$ time and a deletion algorithm in $O(h \log (1+n/h))$ time. Next, we describe parallel and batch-parallel versions of these algorithms which are work-efficient or nearly work-efficient and have poly-logarithmic depth. Finally, we show how to perform insertions near-optimally in $O(c \log(1+n/c))$ time, where $c$ is the number of structural changes in the dendrogram caused by the update, and give a work-efficient parallel version of this algorithm that has polylogarithmic depth.

\end{abstract}




\renewcommand\footnotetextcopyrightpermission[1]{} 

\copyrightyear{2025}
\acmYear{2025}
\setcopyright{cc}
\setcctype{by}
\acmConference[SPAA '25]{37th ACM Symposium on Parallelism in Algorithms and Architectures}{July 28-August 1, 2025}{Portland, OR, USA}
\acmBooktitle{37th ACM Symposium on Parallelism in Algorithms and Architectures (SPAA '25), July 28-August 1, 2025, Portland, OR, USA}
\acmDOI{10.1145/3694906.3743315}
\acmISBN{979-8-4007-1258-6/2025/07}

\maketitle

\section{Introduction}\label{sec:intro}
Single-linkage clustering is a popular form of {\em hierarchical agglomerative clustering} (HAC), a fundamental clustering algorithm that has been widely used in unsupervised learning and data mining~\cite{irbook}. 
In HAC, clusters are greedily paired together based on a distance (or similarity) measure. 
For single-linkage clustering, the distance between clusters is determined by the {\em linkage function} $D(A,B) = \min_{a\in A,b\in B} d(a,b)$.
The output of HAC is typically a {\em dendrogram}, a rooted binary tree where the leaves are clusters containing each original point, and each internal node corresponds to merging a pair of clusters. 
For single-linkage clustering, this tree is called the {\em single-linkage dendrogram} (SLD).

Single-linkage dendrograms have been widely used to analyze real-world data in fields ranging from computational biology~\cite{gasperini2019genome, letunic2007interactive, yengo2022saturated},
image analysis~\cite{gotz2018parallel, havel2019efficient, ouzounis2012alpha}, and astronomy~\cite{baron2019machine, feigelson1998statistical}, among many
others~\cite{henry2005cluster, yim2015hierarchical, irbook}. 
Due to its real-world importance and its role as a sub-step in other fundamental clustering algorithms such as HDBSCAN~\cite{campello2015hierarchical, wang2020theoretically, wang2021fast}, computing the SLD of an input weighted
tree has been widely studied by parallel algorithms researchers in
recent years, with novel algorithms and implementations being
proposed for the shared-memory setting~\cite{havel2019efficient, wang2021fast}, GPUs~\cite{nolet2023cuslink, sao2024pandora},
and distributed memory settings~\cite{gotz2018parallel, havel2019efficient}.
We also observe that single linkage dendrograms are closely related to {\em Cartesian trees}~\cite{vuillemin1980unifying} 
, and we formalize the connection in \revision{Section~\ref{sec:applications} of} this paper.

Traditionally, HAC algorithms have been {\em point-based}~\cite{yu2021parchain}, meaning each pair of points has some distance (or similarity).
Taken together, the distances compose a distance matrix.
Guided by the observation that this matrix may have many negligible entries in some scenarios, recent studies~\cite{dhulipala2021hierarchical, tseng2022parallel, parhac, yu2025dynhac, bateni2024s, dhulipala2023terahac} have focused on {\em graph-based} HAC algorithms. 
In this setting, the input is a graph which only defines similarities for a subset of the pairs.

Due to the rapidly changing nature of modern datasets, it is of significant interest to obtain practical dynamic HAC algorithms that can dynamically maintain a clustering as the underlying data changes.
Similar to the static setting, graph-based HAC algorithms are more suitable for fast dynamic algorithms as an insertion or deletion of a point results in a linear number of changes to the similarity matrix in point-based HAC.
Indeed, prior studies of HAC in the dynamic setting have focused only on graph-based HAC~\cite{tseng2022parallel, yu2025dynhac}.
For dynamic graph-based HAC, the input is a dynamic graph subject to edge insertions, deletions, or similarity updates.

Recently, there has been significant effort to develop parallel batch-dynamic algorithms which process \emph{batches} of updates or queries for fundamental problems on graphs, ordered sets and maps, and other data structures~\cite{acar2019batchconnect, anderson2024deterministic, anderson2023deterministic, tseng2022parallel, wang2020closest, yesantharao2021parallel, acar2020changeprop, dhulipala2019parallel, de2024towards, liu2022parallel, tseng2019batch, dhulipala2019low, dhulipala2021parallel, dhulipala2022pac, men2025parallel}.
Compared to sequential dynamic algorithms, such algorithms have the potential to be significantly more applicable in practice as they are not forced to apply the updates or queries one at a time.
Motivated by the problem of parallelizing dynamic graph-based single-linkage clustering, Tseng et al.~\cite{tseng2022parallel} developed a parallel batch-dynamic minimum spanning forest (MSF) algorithm.
It is well known that single-linkage clustering can be reduced to computing a minimum spanning tree on the points~\cite{Gower1969MST}, or in the case of graph-based HAC, the MSF of the graph.
However, the MSF does not directly provide the SLD, and computing the SLD requires non-trivial processing of the MSF. 
Although dynamically maintaining the MSF of the graph is a crucial step in an end-to-end pipeline for dynamic SLD, Tseng et al. do not maintain the SLD itself. 
Instead, their algorithm only supports pairwise {\em threshold queries} of the form ``are two vertices in the same cluster if we agglomeratively cluster all edges with similarity above a certain threshold?''
Although threshold queries may be sufficient for some applications, solely maintaining an MSF lacks the full power of maintaining an {\em explicit single-linkage dendrogram}.
For example, an explicit SLD can efficiently report all vertices in the cluster defined by a particular edge, or return a flat clustering of the data in low depth. 
In Section~\ref{sec:applications}, we describe several operations and applications that require an explicit SLD.

Prior algorithms for static single-linkage clustering all produce dendrograms, but there is no prior work that studies explicitly maintaining the dendrogram in the dynamic or batch-dynamic setting: there is a gap in the literature between dynamic MSF computation and dynamic SLD maintenance.
In this paper, we are interested in understanding and bridging this gap. 
Specifically, we study the problem of maintaining a data structure $D$ that explicitly represents the SLD of a dynamic graph $G$. 
The input to the problem is the MSF $F$ of the graph $G$, where the algorithm receives edge insertions or deletions in $F$ and must appropriately update $D$.

Explicitly maintaining the SLD is challenging, as there exist graphs where a single change to the MSF can completely change the structure of the dendrogram, implying that the worst-case update cost must be $\Omega(n)$ in some cases. 
In such cases, the cost of updating the dendrogram may be comparable to the cost for statically recomputing the SLD. 
For this reason, prior work~\cite{tseng2022parallel} did not attempt to develop algorithms for explicit SLD maintenance.
However, a more recent study~\cite{Dhulipala2024Dendrogram} provides valuable insight into the structural properties of dendrograms, revealing that the number of changes can actually be bounded by $h$, the height of the dendrogram. 
This fact indicates that in cases where $h = o(n)$, the number of structural changes induced by an update may be low, and there is hope for dynamically updating the dendrogram at a lower cost than static recomputation. This motivates our first research question:

\begin{question}\label{question:height}
    Given an MSF $F$, a data structure $D$ explicitly representing the SLD of $F$, and an edge update which changes $F$ to $F'$, can we update $D$ to represent the SLD of $F'$ in time proportional to $h$, the height of the dendrogram?
\end{question}

Even if $h$ is in $\Omega(n)$, it is still possible for the number of changes to the dendrogram structure caused by a single update to be smaller. Therefore, we are also interested in developing {\em output-sensitive} update algorithms, which do work proportional to the actual number of structural changes caused by an update. Specifically, we study the following second research question:

\begin{question}\label{question:change}
    Given an MSF $F$, a data structure $D$ explicitly representing the SLD of $F$, and an edge update which changes $F$ to $F'$, can we update $D$ to represent the SLD of $F'$ in time proportional to $c$, the number of structural changes in the dendrogram?
\end{question}

In the case where both $h$ and $c$ are large, static recomputation of the SLD would still be the more effective solution, since it is known how to parallelize static SLD computation work-efficiently and in low depth~\cite{Dhulipala2024Dendrogram}.
Therefore, the final focus of our study is developing parallel algorithms for updating the explicit SLD.
We are also interested in parallel batch-dynamic algorithms for explicitly maintaining the SLD, as combining such an algorithm with the parallel batch-dynamic MSF algorithm of Tseng et al.~\cite{tseng2022parallel} would yield an end-to-end parallel batch-dynamic algorithm for graph-based single-linkage clustering. This yields our third research question:

\begin{question}\label{question:parallel}
    Given a dynamic MSF $F$ subject to single updates or batches of updates, can we update an explicit representation $D$ of the dendrogram work-efficiently and in polylogarithmic depth?
\end{question}

\subsection{Our Contributions}
This paper describes several theoretically efficient algorithms for explicit SLD maintenance given a dynamic MSF as input.
\revision{Our results are summarized as follows:}
\revision{
\begin{itemize}[itemsep=0pt,parsep=0pt,leftmargin=15pt]
    \item An insertion algorithm in $O(h)$ time and a deletion algorithm in $O(h \log (1+n/h))$ time, where $h$ is the dendrogram height.
    \item An output-sensitive insertion algorithm in $O(c \log(1+n/c))$ time, where $c$ is the number of structural changes in the dendrogram caused by the insertion.
    \item Parallel versions of all the aforementioned update algorithms which are work-efficient or nearly work-efficient and run in poly-logarithmic depth. Specifically, our parallel insertion and deletion algorithms both take $O(h \log (1+n/h))$ work and $O(\log n \log h)$ depth. Our parallel output-sensitive insertion algorithm takes $O(c \log(1+n/c))$ work and $O(\log n \log h)$ depth.
    \item Parallel batch-dynamic algorithms for batches of insertions and batches of deletions that are work-efficient and have poly-logarithmic depth. Specifically, our batch insertion algorithm takes $O(kh \log(1+n/(kh)))$ work and $O(\log n \log k \log(kh))$ depth. Our batch deletion algorithm takes $O(kh \log(1+n/(kh)))$ work and $O(\log n \log(kh))$ depth.
\end{itemize}
}
\revision{All of our algorithms are deterministic and our parallel algorithms are analyzed in the binary fork-join model~\cite{blelloch2019optimal}. We now provide a more detailed discussion of each of these results.}

\myparagraph{Height Bounded SLD Updates}
\revision{Our first set of results includes a sequential insertion algorithm in $O(h)$ time and a sequential deletion algorithm in $O(h \log (1+n/h))$ time (Theorem~\ref{thm:heightalgs}).}
Insertions can be processed in $O(h)$ time by employing the {\em spine-merge} technique~\cite{Dhulipala2024Dendrogram} which was originally developed for parallel merge-based \emph{static} SLD computation;
\revision{we summarize this technique in Section~\ref{sec:dynsldsequential}.}
For edge deletions, we show that the structural changes to the dendrogram can be thought of as {\em unmerging the spine} corresponding to the deleted edge\revision{, where the {\em spine} of a node in a dendrogram corresponds to its node-to-root path}. 
Our deletion algorithm accomplishes this by maintaining a dynamic tree data structure (for tree connectivity) on the input forest, and separating each node on the spine based on which side of the cut induced by the edge deletion it falls (using a batch connectivity query).

Theorem~\ref{thm:heightalgs} affirmatively answers Question~\ref{question:height}, giving a sequential algorithm that can process updates with cost proportional to the height of the dendrogram. 
Although $h$ can be as large as $n-1$ in the worst-case, this algorithm is already of interest as it will always (both for insertions and deletions) be asymptotically better than static recomputation; an optimal static algorithm requires $\Theta(n\log h)$ time~\cite{Dhulipala2024Dendrogram}, which, since $h \leq n-1$, is asymptotically larger than $O(h)$ for insertions, and asymptotically larger than $O(h \log(1+n/h))$ for deletions (maximized when $h=n-1$ and taking a value of $O(n)$).
Additionally, if $h$ is smaller, the asymptotic cost becomes much smaller than static recomputation (e.g., for $h = \Theta(\log n)$, insertions cost $O(\log n)$ and deletions cost $O(\log^2 n)$).

\begin{restatable}[\revision{Sequential Updates}]{theorem}{heightalgs}\label{thm:heightalgs}
    There exists a dynamic SLD algorithm that processes edge insertions in $O(h)$ time and edge deletions in $O(h \log(1+n/h))$ time, where $h$ is the height of the dendrogram.
\end{restatable}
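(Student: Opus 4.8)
The plan is to show that both operations reduce to \emph{local surgery} on a single node-to-root path of the dendrogram. I would represent $D$ as a pointer-based binary tree in which every internal node stores the weight of its corresponding forest edge together with pointers to its two children and its parent, and every forest vertex points to its leaf. For a node $x$, its spine is the path from $x$ to the root; each internal node $f$ on a spine has one child lying on the spine and one \emph{off-spine} child, whose subtree is a dendrogram over a cluster formed using only edges of weight below $w(f)$. Walking a spine and reading off its off-spine children takes time proportional to the spine length, which is at most $h$. The point is that both algorithms modify only spines and leave every off-spine subtree physically untouched, which is what keeps the cost at $O(h)$ up to a connectivity overhead that appears only for deletions.

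For an insertion of $e=(u,v)$ joining trees $T_1 \ni u$ and $T_2 \ni v$ with dendrograms $D_1,D_2$, I would first locate, by walking up from the leaves $u$ and $v$, the clusters $C_u$ and $C_v$ that $e$ merges at threshold $w(e)$, namely the highest ancestor of $u$ in $D_1$ (resp. $v$ in $D_2$) whose weight is still below $w(e)$. Because $e$ is the unique edge crossing the cut between $T_1$ and $T_2$, one checks that $C_u$ and $C_v$ are exactly the clusters produced by the static Kruskal process at weight $w(e)$, so attaching a fresh internal node for $e$ with children $C_u$ and $C_v$ is correct. Above $w(e)$, the ancestors of $C_u$ in $D_1$ and of $C_v$ in $D_2$ each form a weight-increasing sequence, and the merged dendrogram is obtained by interleaving these two sequences by weight. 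This is precisely the spine-merge operation, and since every off-spine subtree is reused verbatim, it runs in $O(h)$ time, giving the insertion bound.

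For a deletion of $e=(u,v)$, which splits its tree into $T_u \ni u$ and $T_v \ni v$, I would invert this process. Let $f_1,\dots,f_m$ be the spine of the node for $e$, listed in increasing weight, and let $B_j$ be the off-spine child of $f_j$. The key structural claim is that each $B_j$ lies entirely on one side of the cut: $B_j$ is connected using edges of weight below $w(f_j)$ and is disjoint from the cluster containing $e$, hence cannot use $e$ and therefore stays within $T_u$ or within $T_v$; moreover the two children of the node for $e$ are themselves $C_u \subseteq T_u$ and $C_v \subseteq T_v$. Consequently, deleting $e$ amounts to \emph{unmerging} the spine: partition $f_1,\dots,f_m$ by side, and rebuild the $T_u$-dendrogram by re-attaching the $u$-side off-spine children on top of $C_u$ in their already-sorted weight order, and symmetrically for $T_v$. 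One verifies that this reproduces the Kruskal process on each side, and the rebuild itself is $O(h)$.

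The remaining question, and the main obstacle, is classifying each $B_j$ by side within the target budget. Naively testing which side a representative vertex of each $B_j$ falls on costs one connectivity query per spine node, i.e. $O(h \log n)$ with a standard dynamic tree, which is too slow when $h$ is large. I would instead maintain a dynamic tree with an Euler-tour (subtree) representation, so that ``is $x$ on the $v$-side of the cut induced by $e$?'' becomes an interval-membership test, and answer all $\le h$ such tests as a single \emph{batched} search: sorting the query positions and descending the balanced structure once lets overlapping search paths share work, yielding $O(h \log(1 + n/h))$ by convexity of the logarithm (paths of average depth $\log(n/h)$). Establishing that this batched which-side query is supported in exactly this bound under edge insertions and deletions, and that a single representative per off-spine child suffices, is where the real care is needed, and it is precisely what pushes the deletion cost from $O(h\log n)$ down to the claimed $O(h\log(1+n/h))$.
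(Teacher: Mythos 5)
Your proposal is correct and follows essentially the same route as the paper: insertions are handled by an $O(h)$ spine merge (the paper's \textsc{SLD-Merge} primitive, imported from prior work on static SLD computation), and deletions by \emph{unmerging} the spine associated with the deleted edge, classifying the affected nodes by which side of the cut they land on, and re-linking each side in its already-sorted weight order. Your structural claims---the interleaving occurs only along the spine, every spine node and off-spine subtree lies wholly within $T_u$ or within $T_v$ since every edge other than $e$ has both endpoints on one side---are exactly the paper's correctness argument. The one place you genuinely diverge is the connectivity oracle for deletions: the paper maintains a rake-compress tree over the input forest and invokes the known batch connectivity bound of $O(k\log(1+n/k))$ work for $k$ queries off the shelf, whereas you propose an Euler-tour representation with a hand-rolled batched descent. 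Your sketch of that step is the weakest link: ``sorting the query positions'' is itself problematic in an Euler-tour tree, since comparing two positions costs $\Theta(\log n)$ and sorting $h$ of them would blow the budget when $h$ is large. The bound you want does hold, but via the standard union-of-paths argument (the union of $h$ leaf-to-root paths in a balanced structure has $\sum_{\ell} \min(h, 2^\ell) = O(h\log(1+n/h))$ nodes), realized e.g.\ by a bottom-up marking pass from the $h$ query leaves followed by a top-down pass propagating rank information over the marked union---which is precisely the mechanism behind the RC-tree batch-query bound the paper cites. Two small omissions worth fixing: you should observe that $h \ge \log_2 n$ (the dendrogram is a binary tree on $n$ leaves), which is what lets the $O(\log n)$ cost of maintaining the connectivity structure under each update be absorbed into the $O(h)$ insertion bound; and the unmerge needs routine handling of the corner cases where one side has no spine nodes above $e$ (so $C_u$ or $C_v$ becomes a root).
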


\myparagraph{Near Optimal Output-Sensitive Insertions}
\revision{Our next result is an insertion-only algorithm in $O(c \log(1+n/c))$ time, where $c$ is the actual number of structural changes in the dendrogram caused by the update (Theorem~\ref{thm:changealgs}).}
\revision{We accomplish this by} improving the efficiency of spine merging using a new kind of path query on dynamic trees called a {\em path weight search} (PWS) query (we show how to implement PWS queries for rake-compress trees~\cite{acar2004dynamizing, acar2005experimental}). A PWS query finds the largest weight node in a path with weight less than some threshold.
Using PWS queries, the algorithm can directly determine the new parent of a given node in the spine in $O(\log n)$ time, rather than traversing through all the nodes in both spines. 
By carefully alternating between the two spines, we show that the merge can be completed with exactly $c$ PWS queries and $c$ pointer changes. \revision{We then show how to combine these queries to avoid redundant work and achieve the time bound from our result.}

Theorem~\ref{thm:changealgs} partially answers Question~\ref{question:change}: our algorithm can process insertions in time proportional to the number of structural changes to the dendrogram. However, achieving a similar bound for deletions remains an open question.

Our output-sensitive incremental algorithm is near-optimal in the sense that the update cost is within a logarithmic factor of the trivial $\Omega(c)$ lower bound for any update algorithm that explicitly maintains the SLD structure.
In the case where $c=O(1)$, there is a simple amortized lower bound of $\Omega(\log n)$ per operation based on the $\Omega(n \log n)$ lower bound for static SLD computation and comparison-based sorting, and thus, our algorithm is optimal in this case.
Furthermore, this output-sensitive algorithm for insertions can be used in conjunction with the deletion algorithm from Theorem~\ref{thm:heightalgs}, resulting in a fully-dynamic SLD algorithm that is near-optimal for insertions, and always asymptotically better than static recomputation.

\begin{restatable}[\revision{Sequential Output-Sensitive Updates}]{theorem}{changealgs}\label{thm:changealgs}
    There exists a dynamic SLD algorithm that processes edge insertions in $O(c \log(1+n/c))$ time, where $c$ is the number of structural changes to the dendrogram caused by the update.
\end{restatable}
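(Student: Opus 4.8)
The plan is to realize an edge insertion as a merge of two dendrograms and to charge the cost of this merge against the number of nodes whose parent pointer actually changes. Inserting an edge $e$ into the forest $F$ joins two trees, and by the spine-merge framework underlying Theorem~\ref{thm:heightalgs}, updating the SLD amounts to interleaving the two root-paths (spines) $S_1$ and $S_2$ that $e$ joins in increasing order of weight: the merged dendrogram is heap-ordered on edge weights, so a node's parent in the new SLD is the next-heavier node in the combined sorted order of $S_1 \cup S_2$. First I would observe that a node keeps its old parent exactly when its successor in the merged order lies on the same spine, and changes its parent exactly when the merged order crosses from one spine to the other. Hence the set of parent changes is precisely the set of ``crossover'' boundaries between maximal runs, and its size is $\Theta(c)$.

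Given this, I would merge the two spines without scanning every node. Maintaining a current position on each spine, I repeatedly take the lighter of the two current nodes and, instead of advancing one step, issue a \emph{path weight search} (PWS) query on the opposite spine to locate the heaviest node still lighter than the next boundary; this jumps directly over an entire run of unchanged nodes to the next crossover, where a single pointer update is applied. Alternating in this way processes the whole merge using exactly one PWS query and one pointer change per crossover, i.e., $c$ queries and $c$ updates in total; correctness follows because the thresholds are monotonically increasing and each PWS query returns the unique attachment point dictated by the heap order. Using the $O(\log n)$-time PWS queries on rake-compress trees already yields an $O(c\log n)$ bound.

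The main obstacle is shaving the $\log n$ factor down to $\log(1+n/c)$, which is the content of ``combining the queries to avoid redundant work.'' The key point is that the PWS queries on each spine are issued at monotonically increasing thresholds, so their answers march monotonically up that spine; consequently consecutive queries share a long common prefix in the rake-compress tree, and the $i$-th query need only be resolved incrementally from the previous answer. I would implement PWS as a finger-search-style descent so that a query skipping $d_i$ spine nodes costs $O(\log(1+d_i))$ rather than $O(\log n)$. Since the fingers on $S_1$ and $S_2$ each advance monotonically, the skip lengths satisfy $\sum_i d_i \le |S_1| + |S_2| \le 2h = O(n)$, and there are $c$ queries. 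The bound then follows from concavity of $x \mapsto \log(1+x)$ (Jensen's inequality): $\sum_{i=1}^{c} \log(1+d_i) \le c\,\log\!\big(1 + \tfrac{1}{c}\sum_i d_i\big) = O\!\big(c\,\log(1+n/c)\big)$. I expect the genuinely delicate part to be the design of the finger-search PWS primitive on rake-compress trees and the proof that its cost is governed by the skipped path length $d_i$; the run-boundary counting and the Jensen step are comparatively routine.
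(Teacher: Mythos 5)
Your algorithm is essentially the one in the paper: alternate PWS queries between the two spines, issuing exactly one query per parent-pointer change, and resume each query incrementally from where the previous one ended; your run/crossover characterization of the $c$ changes and the $O(c\log n)$ baseline both match the paper's argument. The genuine gap is the finger-search lemma you defer to at the end. On rake-compress trees, the per-query cost of jumping to an answer that is $d_i$ spine nodes away is \emph{not} $O(\log(1+d_i))$: RC trees are balanced by rounds of tree contraction, not by keys, so two nodes adjacent on the spine can have their lowest common cluster near the root of the RC tree. For instance, on a path, a vertex that avoids the contracted independent set for $\Theta(\log n)$ rounds has its two incident edges merge into a common cluster only at RC-tree height $\Theta(\log n)$, so even $d_i=1$ can cost $\Theta(\log n)$ per query. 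Consequently the Jensen step $\sum_i \log(1+d_i) \le c\log\bigl(1+\tfrac{1}{c}\sum_i d_i\bigr)$ starts from a false premise, and your accounting only certifies $O(c\log n)$ in the worst case.

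The paper repairs exactly this step with an \emph{aggregate} rather than per-query bound. Since the query thresholds on each spine are monotonically increasing, when a query walks up the RC tree from the previous answer, any cluster whose weight range lies entirely below the current threshold can never be entered by this or any later query; hence across all $c$ queries each RC-tree node is visited $O(1)$ times (at most twice), and the total cost equals the size of the union of the $c$ traversed leaf-to-root paths. A known RC-tree lemma (from the work of Acar et al.\ and Anderson on batch-dynamic RC trees) bounds the union of any $c$ leaf-to-root paths by $O(c\log(1+n/c))$ nodes, which yields the theorem even though an individual query may cost $\Theta(\log n)$ while skipping only one spine node. So your algorithm goes through verbatim once the finger-search claim is replaced by this ``each node visited $O(1)$ times, then union-of-paths'' argument. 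One further small omission: the data structure also maintains an RC tree over the SLD itself, so the $c$ pointer changes must be applied via a batch cut followed by a batch link, costing $O(c\log(1+n/c))$ --- consistent with the final bound, but it needs to be accounted for explicitly.
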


\myparagraph{Parallel Updates}
\revision{Our next set of results are parallel versions of the aforementioned algorithms which are work-efficient or nearly work-efficient and have poly-logarithmic depth.}
A critical step in our parallel update algorithms is to access all of the nodes in a spine in low depth. We do this by maintaining a rake-compress tree~\cite{acar2004dynamizing, acar2005experimental} of the SLD, and traversing down to the nodes in the spine using the {\em path decomposition} in the rake-compress tree. For insertions, we then develop a parallel spine-merge technique that works similarly to classic parallel algorithms for merging two sorted arrays. 
For deletions, we perform a parallel batch of connectivity queries for each node in the spines, perform a parallel filter to separate the node into two sides, and then update all the parent-pointers in parallel.
To develop a parallel output-sensitive incremental algorithm, we define and use {\em path median} and {\em path weight search} queries, and design a novel divide-and-conquer technique to merge two spines.

Theorem~\ref{thm:parallelalgs}~and~\ref{thm:parallelchangealg} affirmatively answers the first part of Question~\ref{question:parallel}: we can in fact perform single updates work-efficiently and in polylogarithmic depth with our algorithms.
Not only is this interesting theoretically, but it enables our dynamic algorithms to be competitive with static recomputation, which has already been made highly parallel and practical~\cite{Dhulipala2024Dendrogram}.

\begin{restatable}[\revision{Parallel Updates}]{theorem}{parallelalgs}\label{thm:parallelalgs}
    There exists a \revision{parallel} dynamic SLD algorithm that processes edge insertions and edge deletions in $O(h \log(1+n/h))$ work and $O(\log n \log h)$ depth in the binary fork-join model, where $h$ is the height of the dendrogram.
\end{restatable}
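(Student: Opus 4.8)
The plan is to prove both bounds by parallelizing the sequential algorithms of \cref{thm:heightalgs}, maintaining two data structures in tandem: a rake-compress tree (RC-tree)~\cite{acar2004dynamizing, acar2005experimental} over the SLD itself, and a batch-parallel dynamic tree over the input forest $F$ for connectivity. The primitive both update types rely on is low-depth access to a \emph{spine} (a node-to-root path of length at most $h$). I would obtain this by exploiting the path decomposition of the RC-tree: any root-to-node path decomposes into $O(\log n)$ path clusters, from which the $O(h)$ spine nodes can be materialized with $O(h\log(1+n/h))$ work and $O(\log n)$ depth. This extraction primitive, together with the symmetric primitive of writing back the $O(h)$ modified nodes as a single batch update to the RC-tree, accounts for both the work bound and the $O(\log n \log h)$ depth: a batch RC-tree update touching $m$ nodes runs in $O(m\log(1+n/m))$ work and $O(\log n \log m)$ depth, and here $m = O(h)$.

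For insertions, once both endpoint spines are extracted I would run a parallel spine-merge analogous to the classic $O(\log h)$-depth parallel merge of two sorted arrays: because the dendrogram merges clusters in increasing order of edge weight, each spine is a weight-sorted sequence, and interleaving them by weight determines the new parent of every affected node. The merge itself costs $O(h)$ work and $O(\log h)$ depth; applying the resulting $O(h)$ pointer changes as one batch RC-tree update then dominates, yielding $O(h\log(1+n/h))$ work and $O(\log n \log h)$ depth overall.

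For deletions, I would model the change as \emph{unmerging} the spine of the deleted edge. After removing the edge from the dynamic tree on $F$, I would issue a batch of $O(h)$ connectivity queries, one per spine node, to decide for each node's cluster which side of the induced cut it falls on; batch connectivity costs $O(h\log(1+n/h))$ work and $O(\log n)$ depth. A parallel filter ($O(h)$ work, $O(\log h)$ depth) then splits the spine into the two sides, after which the two reconstructed sub-spines and their reattached subtrees are written back in parallel as a single batch RC-tree update. Correctness in both cases follows by arguing that the parallel procedures compute exactly the same set of parent-pointer changes as their sequential counterparts in \cref{thm:heightalgs}.

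The main obstacle I anticipate is the parallel spine-merge for insertions: unlike merging plain key arrays, each spine node carries hanging subtrees that must be correctly reattached to the merged spine, and the new parent of a node depends on the interleaving of both sequences. I would therefore need to show that the merge can be organized so that all reattachments and parent assignments are computed in $O(\log h)$ depth without serial dependencies, and that the resulting edit set is exactly the batch consumed by the RC-tree update. The deletion side is comparatively routine once batch connectivity and batch RC-tree updates are available as black boxes; the only delicate point there is charging the connectivity queries and the filter so as to obtain $O(h\log(1+n/h))$ work and $O(\log n)$ depth, both of which the $O(\log n \log h)$-depth batch write-back subsumes.
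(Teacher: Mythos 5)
Your proposal is correct and follows essentially the same route as the paper: an RC tree over the SLD for low-depth spine extraction via path decomposition, a parallel sorted-merge of the two characteristic spines for insertions, batch connectivity queries plus a parallel filter for deletions, and batch RC-tree cuts/links dominating the $O(h\log(1+n/h))$ work and $O(\log n \log h)$ depth. The ``main obstacle'' you flag is in fact a non-issue: hanging subtrees stay attached to their spine nodes, and only the spine nodes' parent pointers change, determined directly by the interleaved weight order.
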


\begin{restatable}[\revision{Parallel Output-Sensitive Updates}]{theorem}{parallelchangealg}\label{thm:parallelchangealg}
    There exists a \revision{parallel} dynamic SLD algorithm that processes edge insertions in $O(c \log (1+n/c))$ work and $O(\log n \log h)$ depth, where $c$ is the number of structural changes to the dendrogram caused by the insertion.
\end{restatable}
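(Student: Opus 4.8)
The plan is to parallelize the sequential output-sensitive insertion algorithm of Theorem~\ref{thm:changealgs}: I would keep its characterization of the update as merging two weight-sorted spines using exactly $c$ path-weight-search (PWS) queries, but replace the inherently sequential alternation between the two spines with a divide-and-conquer merge, reusing the rake-compress tree (RCT) representation of the SLD from Theorem~\ref{thm:parallelalgs} to answer all path queries in $O(\log n)$ depth.

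Concretely, I would maintain the SLD as an RCT augmented to support two queries, each in $O(\log n)$ depth on top of the path-decomposition structure already used for Theorem~\ref{thm:parallelalgs}: a PWS query, returning the heaviest node on a node-to-root path whose weight lies below a given threshold, and a path-median query, returning the weight-median node of a specified contiguous segment of a spine. The merge of the two endpoint spines $S_1$ and $S_2$ then proceeds recursively: take the path median $v$ of the currently active segment of the longer spine, issue one PWS query to locate its crossover node $u$ on the other spine, commit the (at most two) parent-pointer updates this pairing determines, and recurse in parallel on the two independent subproblems lying above and below $(v,u)$. Each subproblem is pruned the moment its boundary weights certify that no crossover lies inside it, so no work is spent on unchanged regions of the spines; this is what makes the procedure output-sensitive and aligns the recursion leaves with the $c$ structural changes.

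For the work bound, the recursion tree has $O(c)$ nodes and each performs $O(1)$ path queries, giving a naive $O(c \log n)$; to sharpen this to the target $O(c \log(1+n/c))$ I would charge each PWS query to the logarithm of the length of the spine segment it actually searches rather than to $\log n$. Since these segments are disjoint and cover at most $n$ nodes in total, concavity of the logarithm (Jensen's inequality) bounds the sum of $\log(\text{segment length})$ over $O(c)$ segments by $O(c \log(1+n/c))$. For depth, each recursion level issues its queries in parallel at $O(\log n)$ depth and every median split halves the length of the active spine segment, so the recursion terminates after $O(\log h)$ levels (spines have length at most $h$); multiplying yields the claimed $O(\log n \log h)$ depth.

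The main obstacle I anticipate is combining output-sensitivity with low depth: a divide-and-conquer that bisected purely by position would touch $\Theta(h)$ nodes, so the median-and-PWS splitting must be arranged to localize each search to a single gap between consecutive changes and to prune branches using only boundary comparisons. Proving that a subproblem with matching boundary weights genuinely contains no parent-pointer change, and verifying that the concurrent path queries on the shared RCT neither conflict nor observe a partially updated tree, are the steps I expect to demand the most care.
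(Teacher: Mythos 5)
Your high-level toolkit matches the paper's (an RC tree over the SLD, path-median plus PWS queries, divide-and-conquer over the two spines, $O(\log h)$ recursion levels each of $O(\log n)$ depth), but the core of your work analysis has a genuine gap: you split at the \emph{median} of the longer spine and claim the recursion tree has $O(c)$ nodes. It does not. The median is in general not a point where any parent-pointer changes, so the $O(\log n)$ query cost of that divide step cannot be charged to a structural change, and your pruning rule does not rescue the bound. Concrete counterexample: suppose all of $\spine(v)$ falls into a single gap between two consecutive nodes of $\spine(u)$, so $c=O(1)$. At every level exactly one child subproblem is prunable (its portion of $\spine(v)$ is empty) and the other contains the gap together with \emph{all} of $\spine(v)$, hence cannot be pruned by boundary weights; localizing the single splice point takes $\Theta(\log h)$ successive median splits, i.e., $\Theta(\log h)$ queries and $\Theta(\log h \log n)$ work, versus the required $O(c\log(1+n/c)) = O(\log n)$. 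In general the recursion tree has $\Theta(c\log h)$ nodes, not $O(c)$. (This is exactly the pitfall the paper flags: splitting at the median makes the divide work unchargeable.) A secondary issue: the pairing of the median $v$ with its crossover $u$ does not by itself ``determine'' any parent-pointer update, since nodes of $\spine(u)$ may lie between $u$ and $v$ in the merged order, so committing pointers at that moment is also unsound.

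The paper's fix is to use the median only as a probe: one PWS query finds the \emph{consecutive pair} $x_v, y_v$ in $\spine(v)$ that brackets the median's weight, and two further PWS queries find $x_u$ (largest node of $\spine(u)$ below $x_v$) and $y_u$ (smallest above $y_v$). All four are sites of genuine parent-pointer changes (e.g., $x_v$'s parent and the previous child of $y_u$), the contiguous segment of $\spine(u)$ strictly between $x_u$ and $y_u$ interleaves with nothing and is excluded from recursion entirely, and since $x_u$ lies below the median and $y_u$ above it, both recursive halves of $\spine(u)$ still shrink by a factor of two. This yields $O(c)$ recursion nodes and hence $O(c\log n)$ work. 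Your Jensen-style sharpening to $O(c\log(1+n/c))$ has an additional unjustified step: a PWS query on an RC tree over the full $n$-node dendrogram costs $\Theta(\log n)$ regardless of how short the searched segment is. The paper obtains size-scaling query costs by performing the RC-tree links and cuts \emph{immediately} at each split (so queries run on trees of the subproblem's size) and by alternating which spine supplies the median (so both spines halve every two levels), and then bounds the total by an $O(c)$-leaf recursion-tree/path-union argument giving $O(c\log(1+n/c))$.
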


\myparagraph{Batch-Parallel Updates}
\revision{Our final set of results are parallel batch-dynamic algorithms for insertions and deletions that are work-efficient and have poly-logarithmic depth (Theorem~\ref{thm:batchalgs}).}
To extend the parallel deletion algorithm to the batch setting, we discovered that a correct algorithm can be obtained by essentially running the single parallel deletion algorithm concurrently for each deleted edge. To achieve a better work bound, we batch all of the connectivity queries and RC tree updates.
The algorithm for batch insertions is much more complex. First, we show how to solve the problem efficiently in the case where all edges connect to one shared center component in a star pattern. In this case, all of the spines are ``merged into'' the center component. Using this {\em star merge} technique, we perform multiple rounds of tree contraction~\cite{miller1985parallel} until all components have been merged appropriately.

Theorem~\ref{thm:batchalgs} affirmatively answers the second part of Question~\ref{question:parallel}: we can also process batches of updates in polylogarithmic depth with our algorithms.
Our parallel batch-dynamic algorithms are interesting since if $h$ is very low, the single update parallel algorithms may not have enough work to take full advantage of multiple cores.
Additionally, combining this result with the parallel batch-dynamic MSF algorithm of Tseng et al. yields an end-to-end parallel batch-dynamic algorithm for SLD maintenance on a dynamic graph.

\begin{restatable}[\revision{Batch-Parallel Updates}]{theorem}{batchalgs}\label{thm:batchalgs}
    There exists a \revision{batch-parallel} dynamic SLD algorithm that processes batches of $k$ edge insertions in $O(k h \log (1+n/(kh)))$ work and $O(\log n \log k \log (kh))$ depth and batches of $k$ edge deletions in $O(kh \log (1+n/(kh)))$ work and $O(\log n \log(kh))$ depth, where $h$ is the height of the dendrogram.
\end{restatable}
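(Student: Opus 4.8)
The plan is to treat the two operations separately, reducing each to the single-update machinery of Theorems~\ref{thm:parallelalgs}~and~\ref{thm:parallelchangealg} while batching the underlying dynamic-tree primitives to eliminate per-query logarithmic factors. \textbf{For batch deletions}, I would run the single parallel deletion of Theorem~\ref{thm:parallelalgs} concurrently for all $k$ deleted edges. Recall that deleting an edge unmerges its spine by deciding, for each node on that spine, which side of the induced cut it falls on, filtering the nodes into the two sides, and rewiring the affected parent pointers. For a batch, the $k$ spines jointly contain $O(kh)$ nodes, and the side decisions are pure connectivity queries; so rather than answer them one at a time, I would issue all $O(kh)$ of them as a single batch to the RC-tree batch-connectivity structure, which answers $m$ queries on an $n$-vertex forest in $O(m\log(1+n/m))$ work and $O(\log n)$ depth, giving $O(kh\log(1+n/(kh)))$ work for $m=O(kh)$. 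A parallel filter splits each spine and a single batched RC-tree update installs all new pointers. Correctness requires checking that the concurrent unmerges compose even when spines overlap; since each node's destination is a pure function of which side(s) of the deleted cuts it lies on, the combined relabeling is well-defined and can be applied in one pass. The semisort grouping spine nodes by originating edge, together with the per-spine reattachment, contributes the $\log(kh)$ factor, yielding $O(\log n\log(kh))$ depth.

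\textbf{For batch insertions, I would first solve the star case}, in which all $k$ inserted edges attach leaf components to a single center component, since concurrent spine-merges can otherwise contend for the same dendrogram nodes. Here every leaf-spine must be interleaved, by weight, into the center's spine, i.e.\ we perform a parallel multiway merge of $k+1$ weight-sorted spines of total length $m=O(kh)$. Rather than merging pairwise, which would incur an extra $\log k$ factor, I would place each element directly using a path weight search as in Theorem~\ref{thm:parallelchangealg}: the number of structural changes equals the output size $O(kh)$, and batching the PWS queries gives $O(kh\log(1+n/(kh)))$ work, while a logarithmic-depth ranking/merge of the sorted spines together with the batched PWS queries gives $O(\log n\log(kh))$ depth.

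\textbf{To reduce an arbitrary batch to star merges}, observe that, viewed on the contracted multigraph whose super-nodes are the components of $F$, the $k$ inserted edges form a forest, since $F$ together with the inserted edges is again a forest. I would apply parallel tree contraction~\cite{miller1985parallel} to this super-node forest: a \emph{rake} step merges all leaves sharing a common parent into that parent---exactly a star merge---and a \emph{compress} step contracts degree-two chains, each a pairwise (degenerate star) merge. Tree contraction terminates in $O(\log k)$ rounds, and within a round the star merges act on disjoint super-nodes and run in parallel. Charging the total work against the number of structural changes and the total depth against $O(\log k)$ rounds of star merges, each of depth $O(\log n\log(kh))$, gives the claimed $O(kh\log(1+n/(kh)))$ work and $O(\log n\log k\log(kh))$ depth.

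\textbf{The main obstacle is the insertion analysis}, in two parts. First, correctness: I must show that interleaving the spines round-by-round through tree contraction produces \emph{exactly} the SLD of $F'$, even though a single dendrogram node participates in merges across several rounds and its final position is dictated by the global weight ordering over all inserted edges; I would establish this from the order-independence of single-linkage merges, namely that the final interleaving of any collection of spines is determined solely by the edge weights. Second, the work bound: a naive accounting charges each surviving node once per round it is active, introducing a spurious $\log k$ factor and yielding $O(kh\log k\cdot\log(1+n/(kh)))$ work. I would remove this factor by never materializing intermediate pointer changes---using the PWS machinery to compute each node's \emph{final} parent directly, so that each of the $O(kh)$ structural changes is realized exactly once and the batched-PWS bound $O(kh\log(1+n/(kh)))$ applies, with tree contraction serving only to schedule the computation into $O(\log k)$ parallel rounds.
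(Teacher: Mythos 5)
Your deletion argument matches the paper's essentially verbatim (run the single-edge unmerges concurrently, answer all $O(kh)$ side-decisions as one batched connectivity query, install pointers with batched RC-tree updates), and your top-level insertion skeleton---star merges scheduled by tree contraction on the incidence forest over $O(\log k)$ rounds---is also exactly the paper's. The genuine gap is inside the star case. You frame it as a flat multiway merge of $k+1$ weight-sorted spines into ``the center's spine,'' but there is no single center spine: the $k$ leaf spines merge into $k$ \emph{different} characteristic spines $s_1^0,\ldots,s_k^0$ of $T_0$, whose union $D_0$ is a tree with up to $k-1$ branching nodes. Two leaf-spine nodes of comparable weight must \emph{not} be interleaved with each other if they attach below different branch points, so a global weight-ordered placement is incorrect. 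The paper's \textsc{Star-Merge} exists precisely to handle this: it splits $D_0$ at its branching nodes into segments, splits each $s_i$ at the corresponding weight thresholds, groups subspines by segment, and merges only within groups.

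Your cost accounting also rests on two unsupported claims. First, ``batching the PWS queries gives $O(kh\log(1+n/(kh)))$ work'': the paper's batch path-query primitive costs $O(m\log n)$, not $O(m\log(1+n/m))$; the improved bound for PWS queries is established in the paper only for a single pair of spines, via an ordered resume-from-the-last-finish traversal (sequentially) or a divide-and-conquer charging argument (in parallel), and extending it to $O(kh)$ queries spread across $k$ spines would require a new argument you do not supply. Second, and more fundamentally, your proposed fix of computing each node's \emph{final} parent directly via PWS, with tree contraction ``serving only to schedule,'' cannot work as stated: a PWS query is answered on the RC tree of the \emph{current} dendrogram, whereas a node's final parent may be a node of another component's spine that has not yet been linked in, so no path containing it exists to query. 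Moreover, the spurious $\log k$ work factor this fix is meant to avoid never arises in the paper's round-by-round scheme: each of the $k$ leaf components is contracted exactly once over all $O(\log k)$ rounds, so the total spine volume processed across all rounds is $O(kh)$, and the per-round RC-tree update costs $\sum_r k_r h \log(1+n/(k_r h))$ with $\sum_r k_r = O(k)$ telescope to $O(kh\log(1+n/(kh)))$. The paper therefore materializes each round's merges (extract spines into arrays, use parallel merges within groups, then batched links and cuts) and obtains the claimed bounds without any batched-PWS machinery; only the $O(\log k\log(kh))$ per-round depth times $O(\log k)$ rounds---actually $O(\log n\log k\log(kh))$ overall---reflects the contraction rounds.
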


\section{Preliminaries}\label{sec:prelims}

\subsection{Single-Linkage Clustering}\label{sec:prelimsld}

\begin{figure*}[t]
    \centering
    \includegraphics[width=0.9\linewidth]{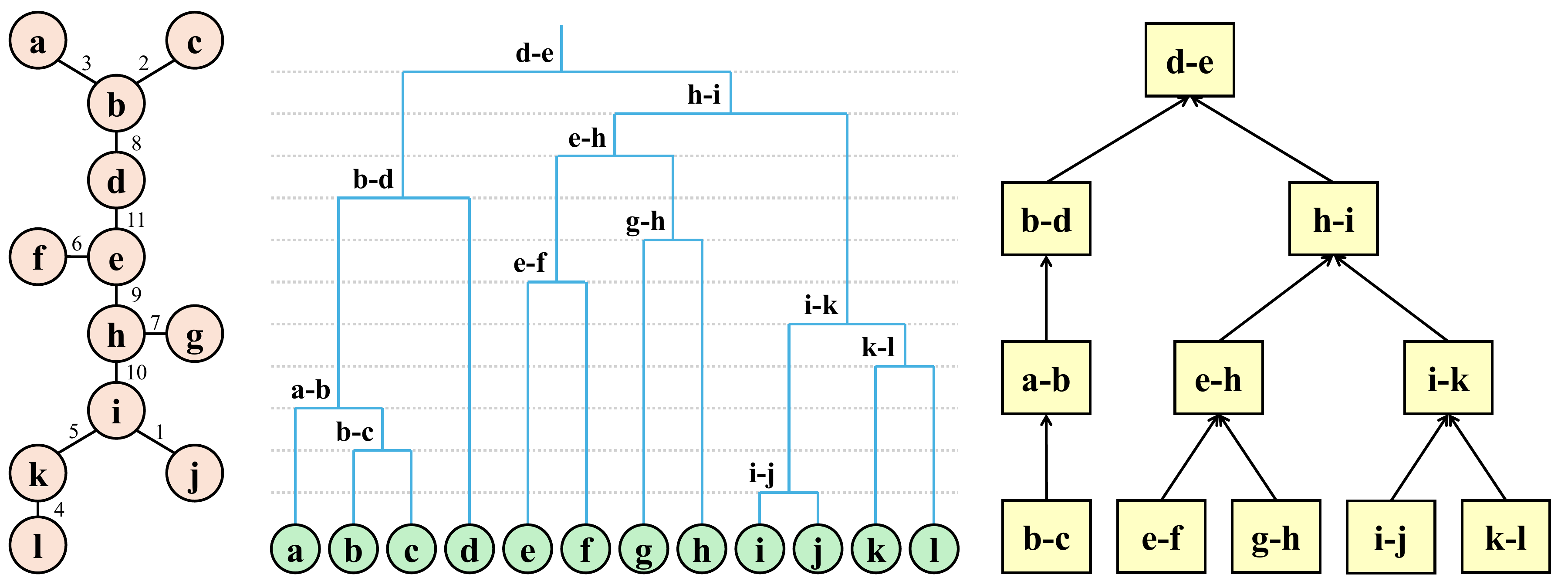}
    \caption{\small An example of a single-linkage dendrogram for a tree. The left shows the input tree with weighted edges; the edges are merged in order of increasing rank by the sequential algorithm. The middle shows a typical visualization of a dendrogram where the ``height'' of each edge corresponds to its rank or weight. The right shows the SLD data structure which only stores nodes for edges and parent-pointers.}
    \label{fig:sld_example}
\end{figure*}

Consider a weighted undirected graph $G = (V,E)$, where the weight of edge $e\in E$ is denoted by $w(e)$. In \defn{single-linkage clustering}, the distance $D(A,B)$ between two clusters $A$ and $B$ is the minimum weight of an edge in the cut induced on $(A,B)$:
\begin{equation*}
    D(A,B) = \min_{e \in \textsf{Cut}(A,B)} w(e).
\end{equation*}

In this paper, we assume the input graph is an edge-weighted tree as it is well known that single-linkage clustering on weighted graphs can be reduced to single-linkage clustering on the minimum spanning forest of the graph~\cite{Gower1969MST}.
Given an input edge-weighted forest $F = (V,E)$, let the \defn{rank} of an edge $e$, $r_e \in [n-1]$, be the position of this edge in the edge sequence sorted by weight (with ties broken consistently). We note that our algorithms do not require us to compute the ranks, however, using rank simplifies the presentation of our algorithms. We use the terms \emph{rank} and \emph{weight} interchangeably when referring to an edge.

The \defn{single-linkage dendrogram} (SLD) of a forest $F = (V,E)$ is a rooted binary forest $D$, where each leaf corresponds to a vertex in $V$ and each internal node corresponds to an edge in $E$. We denote the internal node corresponding to edge $e$ as $node(e)$ or simply node $e$.
\revision{The structure of the rooted binary forest corresponds to the merges of clusters that would occur by iteratively pairing together the closest distance clusters (using single-linkage clustering).}
We assume the output SLD will be stored as a \revision{binary tree}, where each node $e$ \revision{stores a pointer} to its parent node $p(e)$. For convenience, we drop the leaf nodes and only consider the tree on the $n-1$ internal edge nodes. For an edge $e\in E$, the \defn{spine} of $e$, denoted $\spine_D(e)$, is defined as the linked list starting from node $e$ until the root in $D$. Observe that the ranks of edges corresponding to the nodes along the spine of $e$ are in increasing order from node $e$ to the root. We use $\spine(e)$ when the tree and SLD are clear from context. We use the terms SLD and dendrogram interchangeably.
Figure~\ref{fig:sld_example} shows an example of an SLD for a given input tree.

\subsection{The Dynamic SLD Problem}
Here, we formally introduce and define the \defn{dynamic single-linkage dendrogram} problem, which we study in this paper. 
To the best of our knowledge, this is the first work to define and study this problem.
The goal is to {\em explicitly} maintain the SLD of an input {\em tree} subject to edge insertions and deletions.

\begin{problem}[The Fully-Dynamic Single-Linkage Dendrogram Problem]
    Given a dynamic weighted forest $F$ as input, explicitly maintain the single-linkage dendrogram for $F$. That is, maintain a data structure $D$ which explicitly represents the dendrogram as a binary tree. The input is a sequence of edge insertions or edge deletions in $F$.
\end{problem}

Similar to works on static graph-based SLD computation~\cite{Dhulipala2024Dendrogram}, we formulate the input as a dynamic forest rather than a dynamic graph because of the well known reduction from single-linkage clustering to single-linkage clustering on the minimum spanning forest (MSF) of the graph~\cite{Gower1969MST}.
Thus, a natural related problem is what we call the \defn{dynamic single-linkage clustering} problem. In this problem, the input is a dynamic {\em graph} subject to edge insertions and deletions.

\begin{problem}[The Fully-Dynamic Single-Linkage Clustering Problem]
    Given a dynamic weighted graph $G$ as input, explicitly maintain the single-linkage dendrogram for $G$. That is, maintain a data structure $D$ which explicitly represents the dendrogram as a binary tree. The input is a sequence of edge insertions or edge deletions in $G$.
\end{problem}

We do not directly study this second problem in this paper, but it is easy to see that combining our algorithms with existing dynamic MSF algorithms~\cite{holm2001poly, wulff2017fully} or batch-dynamic MSF algorithms~\cite{tseng2022parallel} results in solutions to the dynamic single-linkage clustering problem.

\subsection{Parallel Model and Primitives}\label{sec:parprelims}
We analyze all of our parallel algorithms in the \defn{binary fork-join model}~\cite{blelloch2019optimal}.
The binary fork-join model extends the standard RAM model with a {\em fork} instruction, which forks a child thread, and a {\em join} instruction to join threads after forking them. All threads share a common memory. A computation starts with a single initial thread, and ends when an {\em end all} operation is called. The computation can be viewed as a directed acyclic graph (DAG) where each node represents an instruction, nodes point to the node for the next instructions, nodes for fork instructions point to two nodes, and nodes for join instructions are pointed to by the last instruction of the child thread. The \defn{work} of a computation is defined as the total number of nodes in the DAG. The \defn{depth} of a computation is defined as the length of the longest path in the DAG.

We now define some parallel primitives used in this paper.
A \defn{parallel filter}~\cite{jaja1992parallel} takes a sequence of $n$ elements and some predicate on these elements and returns a new sequence of the elements for which the predicate returns true. Existing methods ensure that the ordering of elements is preserved in the filtered sequence, which our algorithms require. Parallel filter can be done in $O(n)$ work and $O(\log n)$ depth.
A \defn{parallel merge}~\cite{Cole1988, jaja1992parallel} takes two sequences in sorted order with $n$ total elements, and merges them into a single sequence in sorted order. Parallel merge can be done in $O(n)$ work and $O(\log n)$ depth.

\subsection{Dynamic Trees}\label{sec:dyntrees}

Dynamic trees~\cite{sleator1983data} are a fundamental data structure which are useful for efficiently processing certain queries on trees subject to edge insertions and deletions.
There exist many dynamic trees data structures with varying properties~\cite{sleator1983data, frederickson1985data, henzinger1995randomized}. In this paper, we focus only on rake-compress trees (RC trees)~\cite{acar2004dynamizing, acar2005experimental} because they can support theoretically efficient batch-parallel updates and queries~\cite{acar2020changeprop, anderson2023thesis, anderson2024deterministic}, and they provide a useful {\em path decomposition} property which our algorithms utilize. 
We use the deterministic version of parallel RC trees~\cite{anderson2024deterministic}, \revision{although there exist randomized versions with slightly better depth bounds for batch links and batch cuts~\cite{acar2020changeprop, anderson2023thesis}}. The remainder of this section summarizes RC trees.

\defn{Rake-compress trees} are a dynamic trees data structure based on the dynamization of static parallel tree contraction. The data structure maintains a hierarchical structure representing multiple rounds of contraction on the input tree. In each round, a maximal independent set of degree $1$ and $2$ vertices are contracted. When vertices and edges contract, they form {\em clusters} which are represented by vertices or edges in the tree at the next round.
Degree $1$ vertices \defn{rake} onto their neighboring vertex and form a cluster with their one incident edge. Degree $2$ vertices \defn{compress} with their two incident edges forming a new cluster that is the union of the clusters represented by the vertex and its two incident edges. This new cluster is represented by a single new edge in the next tree. After $O(\log n)$ rounds, the tree is fully contracted.

The RC tree represents this contraction process as a tree with height $O(\log n)$, where the leaves are the original edges and vertices in the input tree, and the children of each internal node are the clusters, edges, or vertices which were combined to form the cluster represented by the given internal node.
Importantly, there are two types of internal clusters in the RC tree.
\defn{Unary clusters} are formed by the ``rake'' of a degree $1$ vertex and represent some rooted subtree of the input tree.
\defn{Binary clusters} are formed from the ``compress'' of a degree $2$ vertex and represent the path between (but not including) two vertices as well as everything that branches off that path. This path is called the \defn{cluster path} of a binary cluster.
One important property is that any binary cluster is the combination of two binary clusters (or original edges) that were incident to a common vertex, and the cluster path of the parent cluster is the union of the cluster paths of these two children.
A second property is that for any two vertices $u$ and $v$, the RC tree can return a set of $O(\log n)$ binary clusters such that the union of their disjoint cluster paths forms exactly the path between $u$ and $v$. This set of clusters is called a \defn{path decomposition} and takes $O(\log n)$ time to compute.

In this paper, we use these properties to develop new operations on RC trees which allow us to efficiently operate on spines in the dendrogram dynamically and in parallel.
\revision{We omit most of the implementation details about path decompositions in RC trees, and refer to the work of Anderson for detailed explanations~\cite{anderson2023thesis}.}

We now describe the key operations on RC trees used in this paper. Table~\ref{tab:rc_tree_costs} summarizes the costs of each of these operations. For each operation, there is a corresponding batch parallel-operation. \revision{The depth bounds shown in the table are in the binary fork-join model. Note that the original algorithms for batch-parallel links and cuts were analyzed in the CRCW-PRAM model, but the algorithms can be easily translated to binary fork-join algorithms with the slightly larger depth bound shown in the table.}
\begin{itemize}[itemsep=0pt,parsep=0pt,leftmargin=15pt]
    \item \defn{$\mathsf{Link}(u,v)$}: Given two disconnected vertices $u$ and $v$, insert the edge $(u,v)$ joining the two tree components containing them.
    \item \defn{$\mathsf{Cut}(u,v)$}: Given two neighboring vertices $u$ and $v$, delete the edge $(u,v)$ splitting the tree containing them into two trees.
    \item \defn{$\mathsf{ConnectivityQuery}(u,v)$}: Returns whether vertices $u$ and $v$ are connected. Alternatively returns some representative value for the components containing $u$ and/or $v$.
    \item \defn{$\mathsf{PathQuery}(u,v)$}: Returns some aggregate information about the edge or vertex weights on the path from $u$ to $v$. Alternatively, returns the path decomposition (set of binary clusters) representing the path between $u$ and $v$.
\end{itemize}

\begin{table}[t]
    \centering
    \begin{tabular}{|c|c|c c|}
    \hline
        \multirow{2}{*}{Operation} & \multirow{2}{*}{Sequential} & \multicolumn{2}{c|}{Batch-Parallel} \\
        & & Work & Depth \\
        \hline\hline
        \multirow{2}{*}{Link} & \multirow{2}{*}{$O(\log n)$} & \multirow{2}{*}{$O(k \log(1+n/k))$} & \multirow{2}{*}{$O(\log n \log k)$} \\
         & & & \\
        \multirow{2}{*}{Cut} & \multirow{2}{*}{$O(\log n)$} & \multirow{2}{*}{$O(k \log(1+n/k))$} & \multirow{2}{*}{$O(\log n \log k)$} \\
         & & & \\
        Conn. & \multirow{2}{*}{$O(\log n)$} & \multirow{2}{*}{$O(k \log(1+n/k))$} & \multirow{2}{*}{$O(\log n)$} \\
        Query & & & \\
        Path  & \multirow{2}{*}{$O(\log n)$} & \multirow{2}{*}{$O(k \log n)$} & \multirow{2}{*}{$O(\log n)$} \\
        Query & & & \\
        \hline
    \end{tabular}
    \caption{\small The costs of various operations for RC trees on a tree with $n$ vertices, and batches of size $k$.}
    \label{tab:rc_tree_costs}
\end{table}

\section{\texorpdfstring{\dynsld}{DynSLD}: Maintaining Dendrograms Explicitly in the Fully-Dynamic Setting}\label{sec:dynsld}
In this section, we describe our algorithms which solve the dynamic single-linkage dendrogram problem. We broadly refer to our algorithms as \dynsld.
In Section~\ref{sec:dynsldsequential}, we begin by describing simple sequential update algorithms. In Section~\ref{sec:dynsldparallel}, we show how to derive parallel versions of our algorithms using RC trees.
In Section~\ref{sec:dynsldbatch}, we describe parallel batch-dynamic update algorithms.

\subsection{Sequential Update Algorithms}\label{sec:dynsldsequential}

\myparagraph{Spine-Merge Primitive} 
The edge insertion algorithm is based on the \sldmerge subroutine from Dhulipala et al.~\cite{Dhulipala2024Dendrogram}. We show the pseudo-code for \sldmerge in Algorithm~\ref{alg:sld-merge}.
Given two subtrees of the input $T_1$ and $T_2$ that share exactly one common vertex $v$ but no common edges, along with their respective SLDs $D_1$ and $D_2$, the SLD of $T_1\cup T_2$ can be obtained by merging the spines of the \emph{minimum-rank edges} incident to $v$ in both subtrees (see definitions of spine and rank in Section~\ref{sec:prelimsld}). We refer to these two spines as the \defn{ characteristic spines} corresponding to this merge.
{\em Merging two spines} is defined as combining the two sequences of nodes (which are ordered by increasing rank) into one such that the new sequence is also ordered by increasing rank.
Intuitively, all other nodes outside of these two spines in the SLD remain unaffected, as the clustering process continues to merge edges in the same (relative) order as before until it encounters a cluster that could potentially interact with clusters from the other subtree. However, such a cluster must contain the common vertex $v$, meaning that the spines of the minimum-rank edges incident to $v$ precisely correspond to the set of clusters that could potentially interact with those from the other subtree during clustering.

\begin{algorithm}
\caption{Merging two single-linkage dendrograms.}\label{alg:sld-merge}
\DontPrintSemicolon
\function{$\sldmerge(T_1, T_2, v)$}{
    Let $e_1^\star$ denote the edge with minimum rank incident to vertex $v$ in $T_1$.\\
    Let $e_2^\star$ denote the edge with minimum rank incident to vertex $v$ in $T_2$.\\
    Merge $D_1$ and $D_2$ along the spines of $e_1^\star$ and $e_2^\star$.\\
}
\end{algorithm}

\myparagraph{Insertions}
Using \sldmerge, we obtain an insertion algorithm through a simple two step process: given an input edge $e=(u,v)$, let $T_u$ and $T_v$ denote the subtrees currently containing $u$ and $v$, respectively. First, treat $e$ as a subtree, say $E$, then apply $\sldmerge(T_u,E,\allowbreak u)$, followed by $\sldmerge(T_v,T_u\cup E,v)$.
The beginning of Algorithm~\ref{alg:seq-update} shows pseudo-code for edge insertion, and the left side of Figure~\ref{fig:seq-update} shows an example of an insertion. Since the running time of \sldmerge is $O(h)$~\cite{Dhulipala2024Dendrogram}, the overall running time of the insertion algorithm is also $O(h)$.

\begin{figure*}[t]
    \centering
    \includegraphics[width=\linewidth]{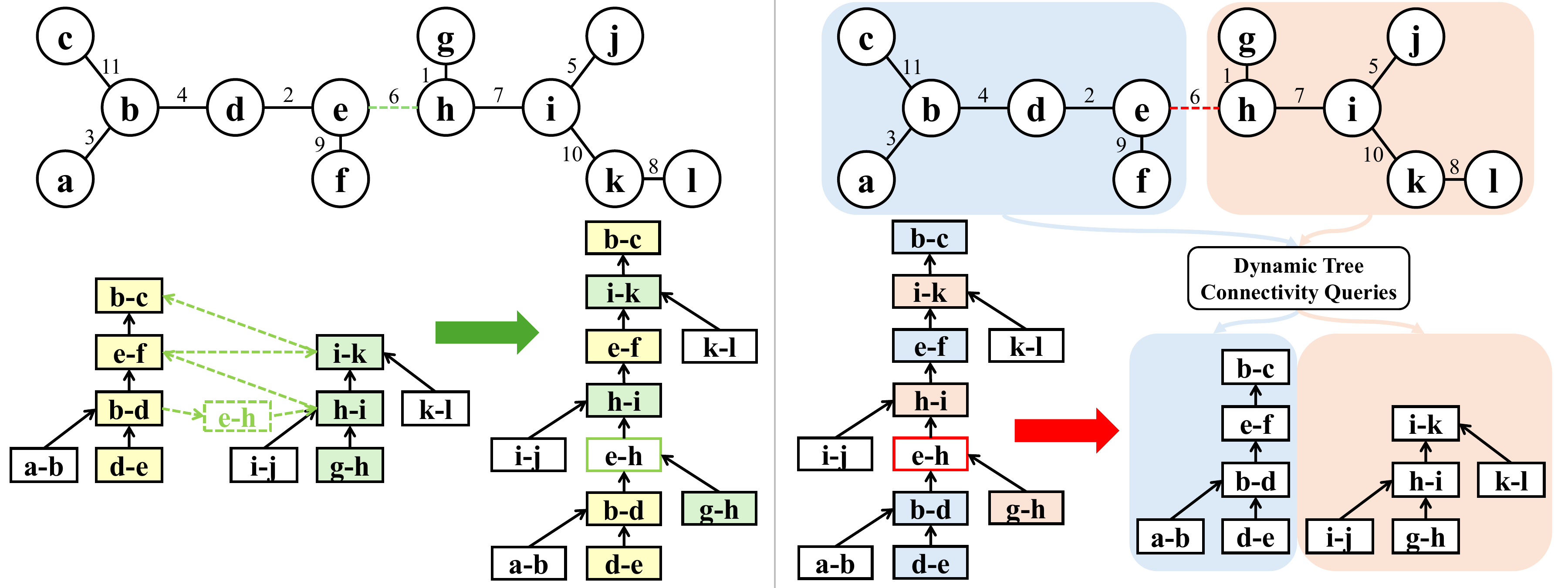}
    \caption{\small An illustration of an edge insertion and an edge deletion in \dynsld. The top of each side shows the input tree and the edge update. The bottom  of each side shows the corresponding changes to the dendrogram. The left side depicts the insertion of edge $(e,h)$ (for the sake of brevity we depict both spine merges in one step). The right side depicts its deletion. For insertions, the two characteristic spines are merged in order of increasing edge rank. For deletions, the characteristic spine is unmerged using the connectivity information from the input tree.}
    \label{fig:seq-update}
\end{figure*}

\begin{algorithm}[t]
\caption{Sequential update algorithms for dynamic single-linkage dendrograms.}\label{alg:seq-update}
\DontPrintSemicolon
\function{$\ins(u, v)$}{
    Let $e=\{u,v\}$, and let $T_u$ and $T_v$ denote the subtrees containing $u$ and $v$, respectively. Also, let $E$ denote the subtree containing only $e$.\\
    Let $T^E_u \gets T_u \cup E$. Then, $D^E_u \gets \sldmerge(T_u, E, u)$,\\
    Let $T_{uv} \gets T^E_u \cup T_v$. Then, $D_{uv} \gets \sldmerge(T^E_u, T_v, v)$.
}
\function{$\del(u,v)$}{
    Let $e=\{u,v\}$. Let $T_u$ and $T_v$ denote the subtrees containing $u$ and $v$ after deleting $e$.\\
    Let $e_u^\star$ denote the edges with minimum rank incident to vertex $u$ in $T_u$.\\
    Let $e_v^\star$ denote the edges with minimum rank incident to vertex $v$ in $T_v$.\\
    $S_u \gets \spine_{D_{uv}}(e_u^*)$ and $S_v \gets \spine_{D_{uv}}(e_v^*)$.\\
    $S'_u \gets\filter(S_u,T_u)$ and $S'_v \gets \filter(S_v,T_v)$.\\
    Update the pointers according to $S'_u$ and $S'_v$.\\
    Delete node $e$.
}
\end{algorithm}

\myparagraph{Deletions}
We now consider edge deletions. Let $e=(u,v)$ be the edge to be deleted, and let $T_u$ and $T_v$ denote the subtrees containing $u$ and $v$, respectively. The goal is to compute their respective SLDs, $D_u$ and $D_v$. 
We can view deletion as the reverse of an insertion. Specifically, let $e_u^*$ and $e_v^*$ be the minimum-rank edges incident to $u$ and $v$ in $T_u$ and $T_v$, respectively. Consider the insertion of $e$: in the two-step process described earlier, we first add $e$ into $T_u$, which appropriately merges $e$ with $spine_{D_u}(e_u^*)$, yielding the intermediate SLD $D^E_u$. Then, we merge the spines of $e_v^*$ and $e$ (in $D^E_u$) to obtain the final SLD $D_{uv}$. Importantly, the relative ordering of spines of $e_u^*$ and $e_v^*$ in $D_u$ and $D_v$ remains unchanged in the merged SLD $D_{uv}$. 

Hence, given $D_{uv}$, we construct $D_u$ by the following \emph{unmerging} process: identify the spine of $e_u^*$ in $D_{uv}$ and \emph{filter out} nodes not belonging to $T_u$. Observe that every edge except $e$ has both endpoints within the same subtree. The remaining nodes are then linked while preserving their original order, resulting in the required SLD $D_u$ ($D_v$ is computed symmetrically). Since $e$ is not present in either $T_u$ or $T_v$, it is safely removed as it is no longer referenced by any node.
To determine whether a given node belongs to $T_u$ (or $T_v$), we maintain \revision{an RC tree} on the input tree which supports connectivity queries (see Section~\ref{sec:dyntrees}). This data structure supports edge insertions and deletions in $O(\log n)$ time and allows batch connectivity queries in $O(k\log(1+n/k))$ time, where $k$ is the batch size.
As $h \geq \log_2 n$, the cost of updating the dynamic tree data structure does not change the asymptotic cost of edge insertions or deletions.
The right side of Figure~\ref{fig:seq-update} depicts an edge deletion, using the connectivity structure on the input tree to unmerge the spine.

Fetching and updating the spine requires $O(h)$ time in total. The bottleneck lies in the filtering step, which takes $O(h\log(1+n/h))$ time for $O(h)$ connectivity queries \revision{(using the RC tree over the input tree)} for the nodes in the characteristic spine. The filter operation itself runs in $O(h)$ time. Thus, the overall deletion algorithm runs in $O(h\log(1+n/h))$ time.
Theorem~\ref{thm:heightalgs} follows from this deletion algorithm and the insertion algorithm described above.
In Section~\ref{sec:tight_instances}, we show that, when parameterized by $h$, these running times are optimal---i.e., there exist input instances and sequences of insertions and deletions that require $\Omega(h)$ time per update.

\heightalgs*

\subsection{Parallel Update Algorithms}\label{sec:dynsldparallel}
The primary sequential bottleneck in the update algorithms from the previous section arises from traversing spines in the dendrogram. Both the insertion and deletion procedures assume that the SLD is stored as a linked-list-like structure, and updates are applied sequentially by traversing the characteristic spines and adjusting parent-pointers along the way.
To enable efficient parallel updates, we first need a way to extract the spine of a node in the SLD into a data structure that supports random access, such as an array. We show that this can be done in $O(\log n)$ depth and leverage this array-based representation to develop parallel update algorithms.

\myparagraph{Retrieving Spines in Low Depth}
To efficiently extract spines while ensuring efficient updates, we additionally store an RC tree (see \Cref{sec:dyntrees}) of the SLD.
Note that this is different from the RC tree introduced in the previous section which represents the {\em input tree}. The algorithms in this section maintain both an RC tree over the input tree, and an RC tree representing the SLD itself.
To compute $spine_D(e)$, we proceed as follows. First, using the RC tree of the dendrogram we identify the path decomposition for the path between the node representing $e$ and the root of the SLD. Although our RC tree is defined for an unrooted input tree, it is easy to maintain an arbitrary root of each component by storing it in the representative RC tree node.
We then ``unpack'' the clusters in the path decomposition of the spine as follows:
\begin{itemize}[itemsep=0pt,parsep=0pt,leftmargin=15pt]
    \item A base level cluster contributes a single edge to the spine.
    \item If the cluster is binary, we recursively extract spine fragments from its two children that are also binary clusters. We place the child closer to the root first. This ordering is easily maintained by the RC Tree construction and updates \revision{(i.e., maintain the invariant that the left child of an RC Tree node is the cluster with lower edge weights and vice versa)}.
\end{itemize}
Since the number of nodes visited is proportional to the spine length, the total work is $O(h)$. Computing the clusters along the path to the root requires $O(\log n)$ depth. As the spine segments corresponding to different clusters can be extracted independently, and the reconstruction proceeds by traversing downward from these clusters, the overall depth remains bounded by $O(\log n)$.
If needed, one can pack the spine nodes into a contiguous array, also in $O(\log n)$ depth and $O(h)$ work.

\myparagraph{Parallel Insertions}
For an edge insertion, we first compute the two characteristic spines using the path query described earlier. We then merge these spines in $O(h)$ work and $O(\log h)$ depth using a parallel merge operation (see Section~\ref{sec:parprelims}).
Finally, we update the parent-pointers according to the merged order, and also do a batch cut and a batch link in the RC tree representing the SLD in $O(h \log(1+n/h))$ work and $O(\log n \log h)$ depth. Overall, this algorithm requires $O(h \log (1+n/h)))$ work and $O(\log n \log h)$ depth.

\myparagraph{Parallel Deletions}
For deletions, we similarly compute the two characteristic spines and determine connectivity information for their nodes. We then apply a parallel filter operation on each spine, using the connectivity data as the predicate. Finally, we update the parent-pointers according to the sorted order of the filtered nodes, and perform a batch cut and batch link in the RC tree representing the SLD. The work is $O(h\log(1+n/h))$, and the depth is $O(\log n \log h)$.
This proves Theorem~\ref{thm:parallelalgs}:

\parallelalgs*

\subsection{Parallel Batch-Dynamic Update Algorithms}\label{sec:dynsldbatch}
We now describe a batch-dynamic algorithm for the dynamic SLD problem that processes homogeneous batches of $k$ insertions or deletions in $\tO(kh)$ work and polylogarithmic depth 

\myparagraph{Batch Insertions} Given a batch of $k$ edges to insert, we define the \emph{incidence graph} as the graph whose vertices correspond to connected components (or trees), and where two trees $U$ and $V$ are connected by an edge if there exists an input edge $(u,v)$ with $u\in U$ and $v\in V$. Notably, multiple edges in the batch may share the same endpoint in the incidence graph if they connect to the same tree, even if they do not directly share an endpoint.

\myparagraph{Handling Stars}
We first describe how to handle the case where the incidence graph forms a \emph{star graph}---i.e., all edges are incident to a single central tree $T_0$, and the remaining $k$ trees, $T_1,T_2,\ldots,T_k$, form the leaves.
As in the two-step process described before, we first merge the nodes corresponding to the new edges (in the batch) to the dendrograms of the leaves, in parallel. 
Now, if all edges in the batch shared a common vertex $v$ in $T_0$, the update becomes straightforward: we simply merge the $k$ characteristic spines from the leaf trees with the spine of $v$ since it is the common characteristic spine. However, when the $k$ merges occur at different vertices in $T_0$, they require different characteristic spines that may overlap in a non-uniform manner, making the merge more challenging. 

Let $s_i$ denote the characteristic spine of $T_i$, and let $s_i^0$ denote the corresponding characteristic spine it merges with in $T_0$. Define $D_0$ as the subtree corresponding to the union of these $k$ different characteristic spines $s_i^0$ of $T_0$. At a high level, the key idea is to split $D_0$ into ``contiguous segments'' at nodes of degree $3$. Then, partition each $s_i$ accordingly and group its parts with the corresponding segment of $D_0$ that it merges with. Finally, once all groups are formed, we merge them independently in parallel. See the right side of \Cref{fig:batch_insert} for an illustration of the star merge.

\begin{figure*}[t]
    \centering
    \includegraphics[width=\linewidth]{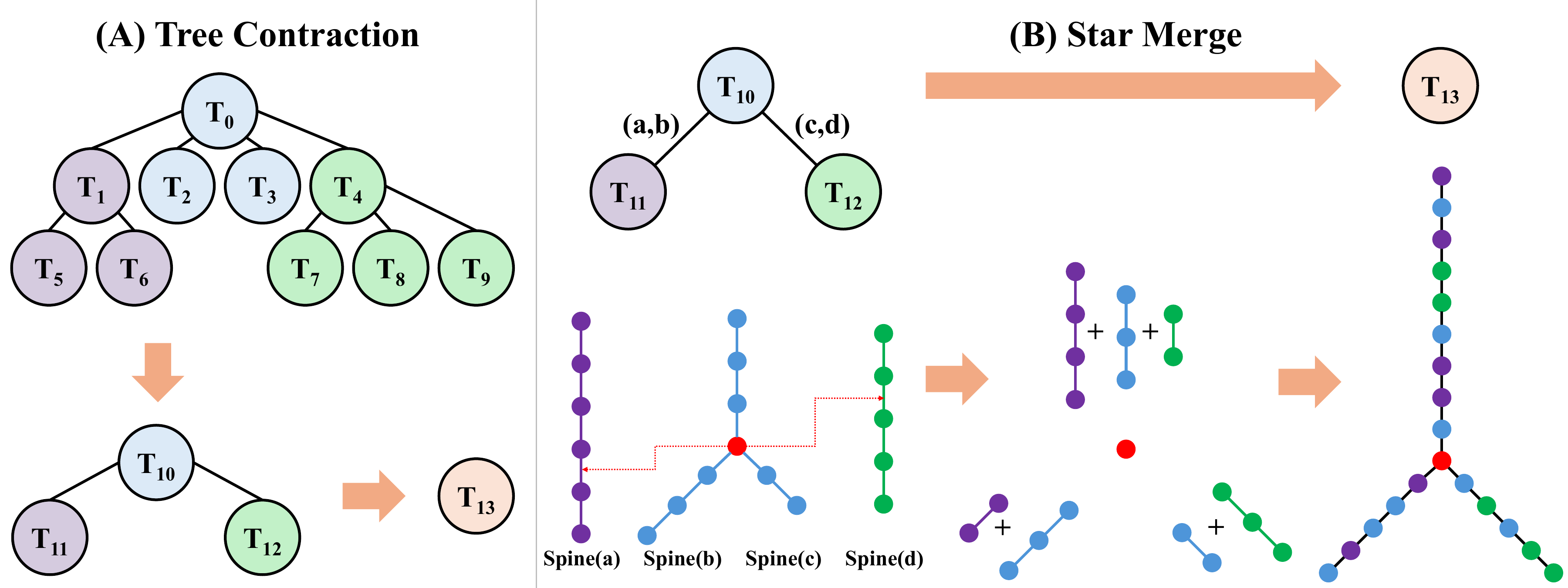}
    \caption{\small Depiction of the batch insertion algorithm. On the left, the updates form a tree where the nodes represent separate components of the input forest, and the edges represent edges in the batch of insertions between two components. The algorithm performs multiple rounds of tree contraction, where each individual contraction forms a star pattern. The right shows the process for merging the single-linkage dendrograms for multiple components in a star pattern.}
    \label{fig:batch_insert}
\end{figure*}

\begin{algorithm}[t]
\caption{Batch update algorithms for dynamic single-linkage dendrograms.}\label{alg:batch-update}
\DontPrintSemicolon
\function{$\textsc{Star-Merge}(T_0, T_1,\ldots, T_k)$}{
    Compute the characteristic spines $s_i,s_i^0$, where $s_i$ corresponds to the spine from $T_i$ and $s_i^0$ corresponds to the spine from $T_0$, for all $i$.\\
    Let $D_0 \gets $ subtree induced by $\cup_i s_i^0$.\\
    Let $B \gets $ nodes with $2$ children in $D_0$.\\
    Split $D_0$ at points in $B$ to obtain segments $d_1,d_2,\ldots$, each $d_i$ ending at either a node in $B$ or a leaf.\\
    \parfor{$i=1$ \text{to} $k$\text{ }} {
        For each $b\in B\cap s_i^0$ in parallel, identify the highest-rank node in $s_i$ whose rank is smaller than $b$.\\
        Split $s_i$ at these nodes to obtain the subspines $s_{i,1},s_{i,2},\ldots$.
        Assign $s_{ij}$ to the group corresponding to $b \in B$ that caused the split, or the leaf of $s_i^0$.
    }
    \parfor{each group $G_i$ of subspines\text{ }}{
        Merge the subspines in $G_i$ with $d_i$ to obtain a single sequence of nodes.\\
        Assign parent-pointers of nodes to their adjacent node.\\
        Assign parent-pointer of the highest-rank node to the branch point $b_i \in B$ of the segment above $d_i$.
    }
}
\function{$\textsc{Batch-}\ins(e_1,e_2,\ldots)$}{
    Compute Incidence Graph $\mathcal{I}$.\\
    \parfor{each connected component $T$ in $\mathcal{I}$}{
        \While{$|T| > 1$}{
            Apply tree contraction on $T$ and obtain a sequence of stars $\{S_1,S_2,\ldots\}$, where each $S_i = \{T_0,T_1,T_2,\ldots\}$ corresponds to one of the disjoint stars of vertices.\\
            For each $S_i$ apply $\textsc{Star-Merge}(S_i)$ in parallel.\\
            Contract each star into a single vertex, and update $T$.
        }
    }
}
\function{$\textsc{Batch-}\del(e_1,e_2,\ldots)$}{
    Apply a batch delete operation to update the RC tree used for connectivity queries.\\
    Apply $\del(e_i=(u_i,v_i))$, for all $i$, in parallel.
}
\end{algorithm}

Specifically, let $B$ denote the set of \emph{branching nodes}, which are the nodes having two children in $D_0$.
In Figure~\ref{fig:batch_insert}, the red node represents the sole branching node.
We then decompose $D_0$ by splitting it at each branching node, breaking it into a collection of paths with lower endpoints in $B$ or at a leaf. 
Next, consider the branching nodes in spine $s_i^0$. For each such branch node $b$, find the highest-rank node in $s_i$ with rank smaller than the rank of $b$, and split $s_i$ at this node.
This process decomposes $s_i$ into a collection of paths, which we call subspines. Each subspine is assigned to the segment ending at the branch node $b$ that caused the split, or the segment ending at the leaf of the corresponding spine $s_i^0$. Thus, we group the subspines from all spines $s_i$ according to the branch nodes in $B$ and leaves in $D_0$. Finally, we merge the subspines from each group in parallel to obtain the final updated SLD. We refer to this algorithm as \textsc{Star-Merge}.

\myparagraph{Work and Depth of \textsc{Star-Merge}}
We now analyze the work and depth of \textsc{Star-Merge}. Computing the $k$ spines requires $O(kh)$ work and $O(\log n)$ depth.
Given the branching points, the split-points on each spine can be computed in $O(h)$ work and $O(\log h)$ depth using a parallel merge primitive.
Merging a group of subspines with a total of $x$ nodes takes $O(x)$ work and $O(\log k \log h)$ depth using a parallel merge primitive. Since the total number of nodes across all groups is $O(kh)$, the merging step requires $O(kh)$ work and $O(\log k \log h)$ depth. The bottleneck, however, is updating the RC tree representing the SLD. Since there are at most $O(kh)$ updates, the RC tree is updated in $O(kh\log(1+n/(kh)))$ work and $O(\log n \log(kh))$ depth. Thus, overall, \textsc{Star-Merge} requires $O(kh \log(1+n/(kh)))$ work and $O(\log n \log(kh))$ depth.

\myparagraph{The General Case}
Finally, consider the general case of $k$ batch insertions where the incidence graph forms a forest. We solve this with the help of \defn{tree contraction}~\cite{miller1985parallel}. Given a tree on $n$ vertices, tree contractions iteratively finds a maximal independent set of degree 1 and degree 2 vertices, and merges them into one of its neighbors.
This process is repeated until a single vertex remains. The number of rounds of the tree contraction is bounded by $O(\log n)$.
Crucially, the sets of vertices that are merged in each round are disjoint and form stars (one or more vertices merging into a single center vertex).
The idea is to repeatedly compute a single round of maximal tree contractions on the incidence graph, and apply the \textsc{Star-Merge} algorithm independently for each star contraction.
The full method is depicted in Figure~\ref{fig:batch_insert}.

The total work for tree contraction is $O(k)$ and it takes $O(\log k)$ depth to compute the independent set of contractions each round~\cite{anderson2024deterministic, anderson2023thesis}.
The overall work during all star merges remains $O(kh \log(1+n/(kh))$, which dominates the total work. The depth is now $O(\log n\allowbreak\log k\log(kh))$ where the additional $\log k$ factor is due to the number of rounds of tree contraction.

\myparagraph{Batch Deletions} Batch deletions are comparatively easier to handle. After updating the connectivity data structure, we concurrently run the parallel spine unmerge operation (previously described for single edge deletion) on each deleted edge.
Although some of the spines may overlap, multiple parent changes to any node all result in the same value, and thus it does not affect the correctness of the algorithm (although some of the work may be redundant).

We will now analyze the work and depth of the batch deletion algorithm.
Updating the connectivity structure takes $O(k \log (1+n/k))$ work and $O(\log n \log k)$ depth.
Computing the $O(k)$ spines once again requires $O(kh)$ work and $O(\log n)$ depth.
The connectivity information can be obtained by a batch query in $O(kh\log(1+n/(kh)))$ work and $O(\log n)$ depth.
Each spine unmerge happens independently in parallel. In total this requires $O(kh)$ work and $O(\log h)$ depth.
Finally, updating the RC tree takes $O(kh\log(1+n/(kh)))$ work and $O(\log n \log(kh))$ depth.
Overall, the batch deletion takes $O(kh\log(1+n/(kh)))$ work and $O(\log n \log(kh))$ depth.
Theorem~\ref{thm:batchalgs} follows from this batch-deletion algorithm and the previous batch-insertion algorithm:

\batchalgs*

\section{Output-Sensitive Algorithms}\label{sec:outputsensitive}

\begin{figure*}[t]
    \centering
    \includegraphics[width=0.8\linewidth]{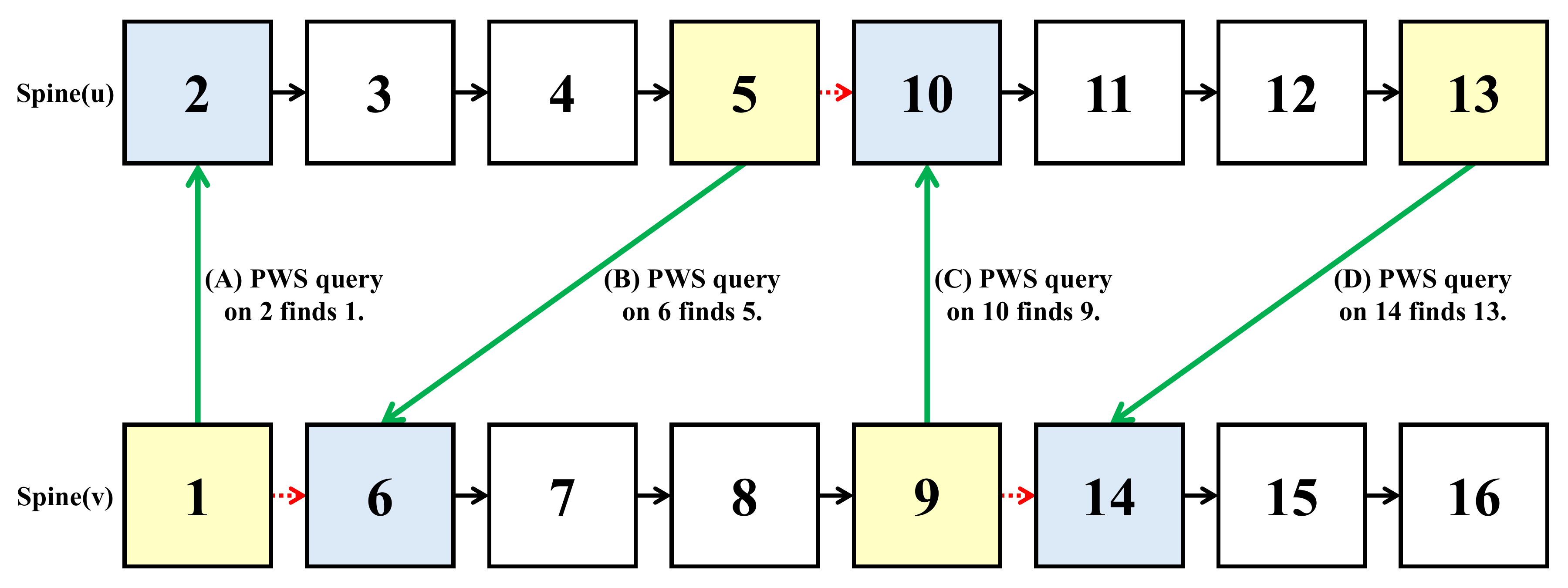}
    \caption{\small A depiction of the output-sensitive algorithm for merging two spines. The nodes are depicted with their weight rather than the edge they represent. In each step, a blue node uses a PWS query to find the node in the opposite spine which should be its child. The child is depicted in yellow and changes its parent-pointer. The algorithm then proceeds with the previous parent of the yellow node.}
    \label{fig:sld_c_merge}
\end{figure*}

In this section, we describe algorithms for edge insertions in a dynamic SLD with cost proportional to the number of changes in the dendrogram. We call these algorithms {\em output-sensitive}, as the cost depends on how much the ``output'' dendrogram changes.
More formally, let $c$ denote the number of parent-pointer changes in the output SLD as a consequence of an edge insertion. Our goal is to design $\Tilde{O}(c)$ cost dynamic update algorithms.

\subsection{New Dynamic Tree Queries}
As a prerequisite for our output-sensitive dynamic SLD algorithms, we first describe two useful queries on trees that can be answered efficiently using RC trees. Although both of these queries involve information about paths in the tree, they do not directly fit into the general framework for path queries established by prior work on RC trees. Thus, we also provide a detailed description of their implementations.

\begin{definition}[Path Weight Search Query]\label{def:pws}
    Given a vertex-weighted tree $T$, a path weight search query takes vertices $u$ and $v$ in $T$ such that the weights on the path from $u$ to $v$ are increasing, and a weight $w$. It returns the maximum-weight vertex on the path whose weight is less than $w$ (and/or the minimum-weight vertex on the path whose weight is greater than $w$).
\end{definition}

To implement path weight search (PWS) queries in RC trees, each binary cluster must store the minimum and maximum weight of any vertex on its cluster path. These values can be easily maintained during RC tree links and cuts. We refer to prior work on RC trees~\cite{acar2004dynamizing, acar2005experimental, anderson2023thesis} for the maintenance of such aggregate values over the cluster paths in each binary cluster.

The query begins by computing the path decomposition of the path between $u$ and $v$, which gives $O(\log n)$ clusters whose cluster paths are disjoint and whose union forms the entire path from $u$ to $v$. Since the path is strictly increasing in weight, these binary clusters define an ordered sequence of disjoint {\em weight ranges}.
If $w$ falls between two of these weight ranges, the largest value in the lower range is returned.
Otherwise, the search proceeds into the cluster whose weight range contains $w$.
The query recursively descends into the child cluster whose weight range contains $w$ until $w$ falls between two ranges or a cluster containing a single vertex is reached.
This \revision{query takes time proportional to the height of the RC tree, which is} $O(\log n)$.

\begin{definition}[Path Median Query]
    Given a tree $T$, a path median query takes two vertices $u$ and $v$ in $T$ and returns the vertex at index $\lfloor \ell/2 \rfloor$ (referred to as the median vertex) in the sequence of vertices on the path between $u$ and $v$, where $\ell$ is the length of the path.
\end{definition}

To implement path median queries in RC trees, every binary cluster stores the length (number of vertices) of its cluster path, which, as before, is easy to maintain during updates.
Similar to PWS queries, the query begins by computing the path decomposition of the path between $u$ and $v$. The total path length $\ell$ is then determined by summing the lengths of these cluster paths lengths (plus the number of clusters in the path decomposition plus one, since vertices ``between'' two clusters are not considered as part of the cluster path by definition).
If the vertex corresponding to index $\lfloor \ell/2 \rfloor$ (the target median) falls on a vertex that connects two of the clusters in the path decomposition, this vertex is returned by the query. Otherwise, the search recursively proceeds into the binary cluster that contains the target index vertex. If this cluster's path contains the vertices with indices in the range $[l,r]$, then we should search for the vertex at index $\lfloor \ell/2 \rfloor - l$ in the cluster path of that cluster.
Further, this binary cluster has two children that are binary clusters, and are both incident to a common vertex. If the common vertex is the target index vertex, the recursion ends. Otherwise, we recurse into one of the two children appropriately.
\revision{Similar to PWS queries, this query takes time proportional to the height of the RC tree, which is} $O(\log n)$.

\subsection{Incremental Output-Sensitive Algorithm}

We now describe an incremental output-sensitive algorithm for a single edge insertion $e=(u,v)$ in the dynamic SLD problem. As before, the algorithm maintains an RC tree over the SLD.
Let $e_u^*$ and $e_v^*$ be the minimum-rank edges incident to $u$ and $v$ in $T_u$ and $T_v$, respectively.
In the original algorithm, we first merge $\spine(e_u^*)$ with the new edge node $e$, and then merge $\spine(e_v^*)$ with $\spine(e)$.
We now describe how to merge two spines in $O(c \log(1+n/c))$ time, where $c$ is the number of parent-pointer changes caused by the merge.
For the first merge with edge node $e$, $c=O(1)$, so the cost is $O(\log n)$; even if these parent-pointer changes are undone in the second merge, the total cost of the insertion is $O(\log n) + O(c \log(1+n/c)) = O(c \log(1+n/c))$, where $c$ is the total number of parent-pointer changes between the original and new dendrograms.

\myparagraph{Output-Sensitive Spine Merging}
We now describe the new algorithm for merging two spines $\spine(u)$ and $\spine(v)$.
The algorithm first discovers all required parent-pointer changes and then uses a batch cut operation followed by a batch link operation in the RC tree to update the pointers.
Without loss of generality, let the weight of $u$ be greater than that of $v$.
This means that there exists a node $x_v$ in $\spine(v)$ that must have its parent changed to $u$. To find $x_v$, we will use a PWS query on $\spine(v)$ using the weight of $u$. Then we set the parent of $x_v$ to be $u$. If $x_v$ was the last node of $\spine(v)$, the algorithm is complete. Else, $x_v$ had a parent $p_v$ and there is a node $x_u$ in $\spine(u)$ that must have its parent changed to $p_v$. Once again, we can determine $x_u$ using a PWS query on $\spine(u)$ with the weight of $p_v$.

The algorithm continues this alternating pattern of finding the node in the other spine that should now point to the node in the current spine, until the found node doesn't have a parent (i.e., it was the last node in one of the spines).
Figure~\ref{fig:sld_c_merge} illustrates this process.
Clearly, the number of PWS queries is exactly the number of parent-pointer changes. Thus, the time for this part of the algorithm is $O(c \log n)$.
Performing batch cut and batch link operations with $c$ edges in the RC tree takes $O(c \log(1+n/c))$ time, and thus, we can immediately upper bound our update cost with $O(c \log n)$.

\begin{figure*}[t]
    \centering
    \includegraphics[width=0.75\linewidth]{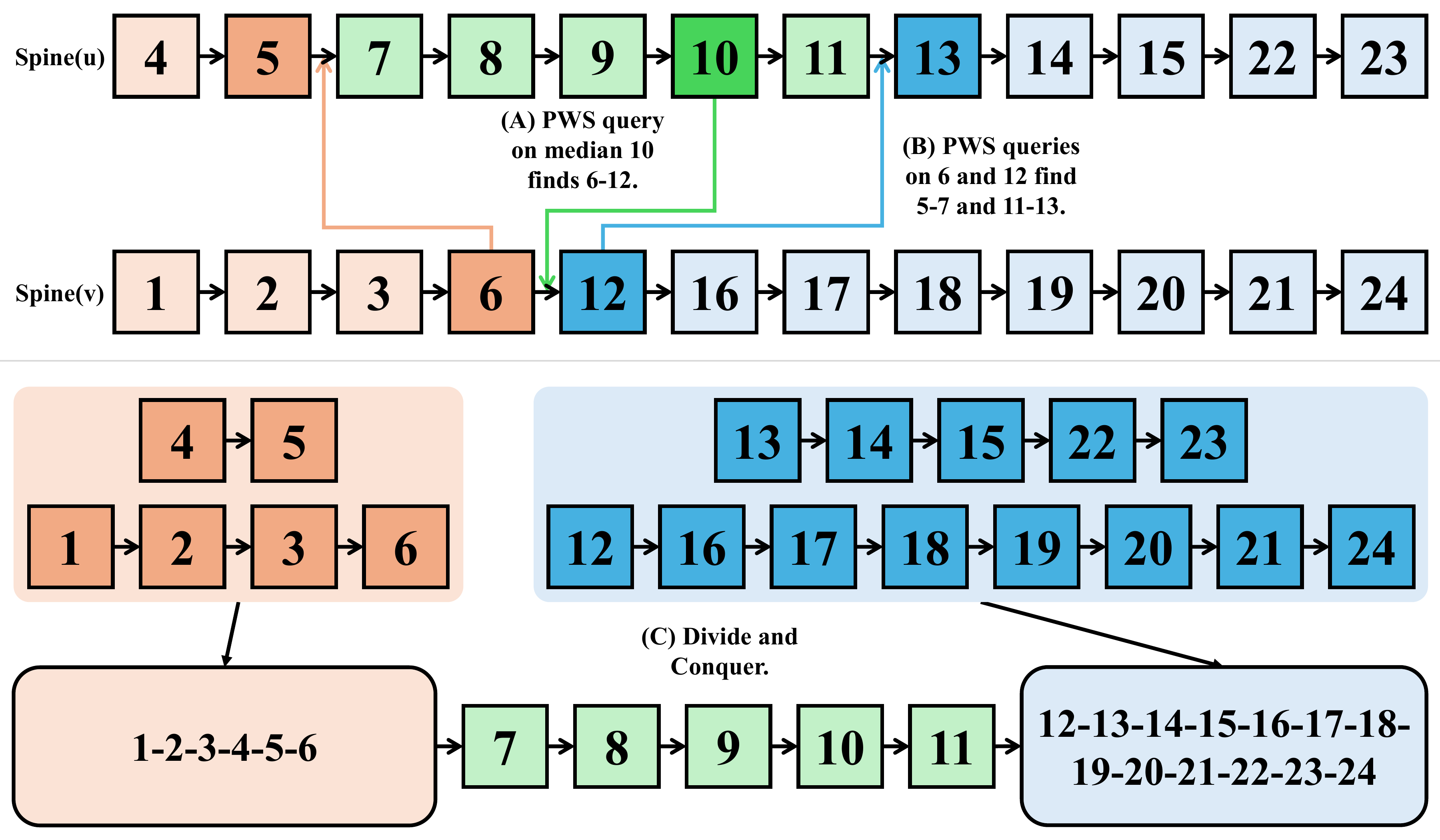}
    \caption{\small A depiction of the parallel output-sensitive spine merge algorithm. First the median value (dark green node) in one spine is found. Then, PWS queries are used to find the values $x_v$ and $y_v$ (dark orange and dark blue nodes) in the other spine ``surrounding'' that median value. Next, PWS queries are used to separate the original spine at split points $x_u$ and $x_v$ (dark orange and dark blue nodes) to the left and right of the median. Finally, both spines are split, the pairs of lower and upper spines are merged recursively in parallel, and the results are combined.}
    \label{fig:sld_par_c_merge}
\end{figure*}

\myparagraph{Better Cost Bound}
Next we describe how to perform the $c$ PWS queries in $O(c \log(1+n/c))$ total time, thereby achieving this same improved bound for the total cost of the insertion.
First, the initial part of each PWS query to get the path decomposition in the RC tree can be done once for both spines and then stored in order by increasing weight ranges.
The remainder of each query involves traversing down the RC tree starting from the path decomposition clusters. Intuitively, these downward query paths may overlap a lot, especially at higher clusters in the RC tree.
We also observe that the PSW queries in each spine occur in increasing order of weight.
To take advantage of these properties, each query (other than the first in each spine) will proceed differently from a normal query. The query will start at the cluster at which the last query finished, i.e., the cluster in which the downward recursion terminated and a vertex was returned. It then traverses up the RC tree until it either reaches one of the path decomposition clusters, or it finds a cluster with a weight range containing the new query weight. In the first case, it will start its search from the cluster with the next larger weight range, and search downward normally. In the latter case, it can start the normal downward search immediately.

This algorithm works because for any weight range encountered during the upward traversal which does not contain the new query weight, the new query weight (and all future query weights) are larger than this entire range. Thus, these clusters will not need to be visited in any future query, since the queries happen in increasing order of query weight.
As a result, it is easy to see that each node will be visited at most twice.
Thus, the total time is asymptotically the same as the number of nodes in the union of the $c$ paths traversed.
Prior work~\cite{acar2020changeprop, anderson2023thesis, anderson2024deterministic} has shown that the number of nodes in any $c$ paths towards the root in an RC tree is $O(c \log(1+n/c))$.
Overall, this leads to an output-sensitive insertion algorithm that takes $O(c \log(1+n/c))$ time, proving Theorem~\ref{thm:changealgs}.

\changealgs*

\subsection{Parallel Output-Sensitive Algorithm}

We now describe a parallel output-sensitive algorithm for merging two dendrogram spines.
Our approach combines median queries, PWS queries, and a divide-and-conquer process over the height of one spine. Figure~\ref{fig:sld_par_c_merge} illustrates our algorithm. The high-level approach is as follows:

\begin{enumerate}[itemsep=0pt,parsep=0pt,leftmargin=15pt]
    \item Use a path median query to find a ``central'' vertex in $\spine(u)$.
    \item Use a PWS query to find points $x_v$ and $y_v$ such that the median falls between these two points in $\spine(v)$.
    \item Use two additional PWS queries to find points $x_u$ and $y_u$ in $\spine(u)$ that immediately precede $x_v$ and immediately succeed $y_v$, respectively.
    \item Divide both spines into a lower portion and an upper portion based on the four points determined in the previous steps.
    \item Recursively merge the two lower portions of the spines and the two upper portions of the spines in parallel.
    \item Concatenate the merged lower spine and merged upper spine.
\end{enumerate}

Our algorithm performs divide-and-conquer on $\spine(u)$ (or the smaller spine for extra efficiency).
First, we use a median query to find the weight $w$ of a central node in $\spine(u)$. Then, we use a PWS query to find in $\spine(v)$ the largest-weight node with weight less than $w$ and the smallest-weight node with weight greater than $w$; call these $x_v$ and $y_v$, respectively.
Next, we use two additional PWS queries to find in $\spine(u)$ the largest-weight node with weight smaller than the weight of $x_v$, and the smallest-weight node with weight larger than the weight of $y_v$; call these $x_u$ and $y_u$, respectively.
Then, we recursively, and in parallel, merge the spine from $x_u$ down with the spine from $x_v$ down, and the spine from $y_u$ up with the spine from $y_v$ up.
This results in a merged lower spine and a merged upper spine. Finally, we concatenate the merged lower spine with the sub-spine of $\spine(u)$ between (but not including) $x_u$ and $y_u$, and then the merged upper spine.

Since $x_u$ is below the median of $\spine(u)$ and $y_u$ is above the median, the upper and lower spines of $\spine(u)$ both have length at most half as big as the original spine. Thus, this recursion will end in $O(\log h)$ depth. Each divide step takes $O(\log n)$ depth, and combining the merged spines takes $O(1)$ depth, so the overall depth is $O(\log n \log h)$.
This proves the depth bound in Theorem~\ref{thm:parallelchangealg}.
Each divide step uses one median query and 3 PWS queries which is $O(\log n)$ work. The combine steps all take $O(1)$ work.
The $O(\log n)$ work in each divide step can be charged to the parent-pointer change of $x_v$, and the previous child of $y_u$, so the work is $O(c \log n)$.

\revision{
\noindent\textbf{Remark:}}
A careful reader may wonder why we cannot just split $\spine(u)$ before or after the median, and instead have to use two additional PWS queries to find $x_u$ and $y_u$.
This is because, if we just split at the median, we cannot charge these queries to a specific parent-pointer change. The algorithm must carefully split the spines only at points that should have a changed parent-pointer in the final merged spine, so that the work of the split can be charged to these parent-pointer changes.
Otherwise, the total number of recursive steps could be as large as $2^{\log h} = \Omega(h)$.
A good algorithm must carefully charge the work while ensuring that the divided spines are both at most half as large as the original spine to ensure the recursive depth stays low.
We accomplish this by finding two split points to the left and right of the median.

\revision{

\myparagraph{Better Work Bound}
To achieve work efficiency with respect to the sequential output-sensitive insertion algorithm, we would like to achieve a bound of $O(c \log (1+n/c))$ for the work of our parallel algorithm. Here, we show how to accomplish this, thereby proving the work bound from Theorem~\ref{thm:parallelchangealg}.

We introduce two key modifications to the algorithm. First, we alternate between $\spine(u)$ and $\spine(v)$ as the spine from which we initiate the median query at each recursive level. This ensures that every two levels, both spines decrease in length by at least a factor of $2$, not just $\spine(u)$.
Second, we perform RC tree links and cuts immediately upon splitting or joining a spine, rather than batching them until the end of the update. This ensures that the cost of the PWS and the median queries (as well as the link and cut operations) scales with the spine's size.

We will now prove that this algorithm takes $O(c \log (1+n/c))$ work.
Consider the {\em binary recursion tree} of the divide-and-conquer algorithm, which has $O(c)$ nodes. In this rescursion tree, for a node $x$ at recursive depth $d(x)$ (i.e., the recursive depth of the subproblem $x$), both spines have length $O(n/2^{d(x)/2})$. Thus, the cost of the median and PWS queries (as well as the RC tree links and cuts) in subproblem $x$ is $O(\log (n/2^{d(x)/2})) = O(\log n - d(x))$.
Now, imagine transforming this binary recursion tree into a ternary tree as follows: for each node $x$, add a new child which is a path of length equal to the cost of the queries in subproblem $x$. Since each node $x$ is at depth $d(x)$ and the new path has length $O(\log n - d(x))$, the height of this ternary tree will be $O(\log n)$. As the recursion tree initially had $O(c)$ nodes and we added one path for each node, the ternary tree has at most $O(c)$ leaves.

The total cost across all queries and updates (and the algorithm's overall work) are proportional to the number of nodes in this ternary tree, which represents the union of the $O(c)$ paths from leaves to the root.
At depths of $O(\log c)$ or lower, there are at most $O(c)$ total nodes. For depths greater than $O(\log c)$, each path has a length of at most $O(\log n - \log c) = O(\log (1+n/c))$. In the worst-case, where all these paths are disjoint, the total number of nodes is $O(c \log (1+ n/c))$. Thus, the total number of nodes, and the total work of the algorithm, is $O(c \log (1+n/c))$.

}

\parallelchangealg*
\revision{

\section{Discussion of Optimality}\label{sec:tight_instances}

In this section, we describe various notions of optimality for the fully-dynamic single-linkage dendrogram problem, establish lower bounds, and assess the optimality of the algorithms presented in this paper.
Consider the following theorem.

}

\begin{theorem}\label{thm:update_lb}
    There exists an input graph and a corresponding update that affects $\Omega(h)$ pointers in the output SLD.
\end{theorem}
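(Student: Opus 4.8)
The plan is to exhibit a concrete family of input trees together with a single edge insertion that forces $\Omega(h)$ parent-pointers in the SLD to change. The natural candidate is a ``path-like'' instance where the dendrogram is a long spine of height $h = \Theta(n)$, since the theorem only asserts existence and the most dramatic change happens when the dendrogram is maximally unbalanced. Concretely, I would take an input tree $T_u$ on $\Theta(n)$ vertices whose edges have ranks chosen so that its SLD is a single long spine (e.g. a path graph where edge weights increase monotonically from one end, so that each successive edge attaches at the top of the current spine, yielding a dendrogram that is itself a path of length $\Theta(n)$). Let $v$ be the low-rank endpoint of this spine, and let $e^\star_u$ be the minimum-rank edge incident to $v$.

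The key step is then to insert a new edge $e = (v, w)$ connecting $v$ to a fresh vertex $w$ (or to a second small component $T_v$) with a rank chosen to land in the \emph{middle} of the existing spine --- say a rank larger than the bottom half of the spine's edges but smaller than the top half. By the \sldmerge analysis (Algorithm~\ref{alg:sld-merge}) and the characterization of insertions in Section~\ref{sec:dynsldsequential}, the insertion merges $\spine(e^\star_u)$ with the (length-one) spine of $e$. Because $e$ has intermediate rank, node $e$ must be spliced into the spine at its correct sorted position; this changes the parent-pointer of the node directly below the insertion point, and, crucially, forces every node strictly above the insertion point to shift its parent by one position in the merged ordering. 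I would argue that $\Theta(h)$ of the nodes on the spine have a strictly different parent in the new SLD than in the old one, giving the $\Omega(h)$ bound on affected pointers. To make the count clean, one can instead give $e$ a rank smaller than every edge in $T_u$: then $e$ becomes the new bottom of the spine, and I claim the standard ``rank shift'' argument shows every node's parent changes --- but this needs care, since merely renumbering does not by itself change a parent-pointer.

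The main obstacle, and the part I would be most careful about, is precisely this last subtlety: inserting at the very bottom with a uniform weight profile may leave most parent-pointers \emph{unchanged} (each node still points to the next-higher-rank node, just with one extra node prepended), so a naive ``every rank shifts'' argument does not translate into $\Omega(h)$ \emph{pointer} changes. The fix is to engineer the second component $T_v$ (or the ranks of the inserted edge relative to the two spines) so that the two characteristic spines genuinely \emph{interleave} in an alternating fashion. If I arrange ranks so that the merged spine alternates between a node from $\spine(e^\star_u)$ and a node from $\spine(e^\star_v)$, then every node of the longer spine acquires a new neighbor above it in the merged order, and hence a new parent-pointer. This interleaving construction is the crux: I would pick $T_v$ to be a second path whose SLD is a spine of length $\Theta(h)$ with ranks strictly interleaving those of $T_u$'s spine, so that the merge of the two length-$\Theta(h)$ spines changes $\Theta(h)$ parent-pointers.

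Finally, I would confirm that the height of the resulting dendrogram is indeed $\Theta(h)$ (so the bound is genuinely in terms of $h$ and not merely $n$), verify that the instance is a valid MSF so the reduction to single-linkage clustering applies, and note that the same construction can be rescaled to produce instances with any target height $h$, not only $h = \Theta(n)$, by padding with a balanced forest on the remaining vertices.
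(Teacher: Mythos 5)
Your final construction is essentially the paper's own proof: the paper takes $n/(h+1)$ disjoint stars whose edge weights form interleaving arithmetic progressions (weights $i, h+i, 2h+i, \ldots$ in star $T_i$), so each SLD is a path of height $h$, and inserting a weight-$0$ edge between two star centers merges the two spines in strictly alternating rank order, changing all $2h+1$ parent-pointers --- exactly your ``interleaving'' fix, with stars in place of paths (an immaterial difference). Do note that your intermediate claim --- that splicing a single node of intermediate rank into a spine forces every node above it to change its parent --- is false (such a splice changes only $O(1)$ pointers, for the same reason as the bottom-insertion case you flagged), but you correctly identified this subtlety yourself and the interleaving construction you settle on is the right, and the paper's, argument.
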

\begin{proof}
    Consider the graph that comprises $n/(h+1)$ disjoint stars (say $T_1, T_2,\ldots, T_{n/(h+1)}$), each containing $h+1$ vertices, such that the weights of edges in $T_1$ are $(1,h+1,2h+1,\ldots)$, in $T_2$ are $(2,h+2, 2h+2,\ldots)$, and so on. Their respective SLDs (say $D_1, D_2, \ldots, D_{n/(h+1)}$) will each be path graphs of height $h$. 

    Now, consider an edge connecting the centers of stars $T_i$ and $T_j$ (arbitrary $i$ and $j$). Let the weight of the new edge be $0$ (anything smaller than the weights of every edge in the graph). The new SLD will be a path graph as well of height $2h+1$, with every node in $D_i$ pointing to the first edge node with weight greater than it, and similar for nodes in $D_j$ as well. Thus, the number of nodes with affected pointers are $2h+1$. 

    Similarly, if we apply a deletion to the edge of weight $0$, we get back the original SLDs, thus affecting $2h+1$ nodes again (with the node corresponding to weight $0$ now deleted). 
\end{proof}

\revision{

Theorem~\ref{thm:update_lb} establishes an $\Omega(h)$ lower bound on the worst-case update cost for the dynamic SLD problem, as it requires explicitly maintaining the dendrogram.
Furthermore, since instances with $h = \Omega(n)$ exist, this implies an $\Omega(n)$ lower bound in the worst case.
All of the algorithms presented in this paper take $O(n)$ time/work in such cases, making them optimal with respect to $n$.
They are also near-optimal with respect to $h$, as they require $\Tilde{O}(h)$ time/work.

We also consider a finer notion of optimality: the lower bound for a specific update on a specific instance, rather than a worst-case across all instances and updates.
In this paper, the parameter $c$ precisely quantifies this measure, denoting the number of structural changes (i.e., parent-pointer changes) to the dendrogram caused by a given update. Since the dynamic SLD problem requires explicitly maintaining the dendrogram, there is a natural tight lower bound of $\Omega(c)$ for a specific update on a specific instance. Our output-sensitive insertion algorithm achieves $\Tilde{O}(c)$ time/work, making it near-optimal under this definition.
}

\section{Dynamic SLD Applications}\label{sec:applications}
In this section, we describe some of the possible queries one can perform on an SLD, and we contrast this with just maintaining a minimum spanning forest (MSF) of the graph. We also describe a major related problem of the dynamic SLD problem, Cartesian Trees, and show how our dynamic algorithms can be applied to also provide dynamic algorithms for this problem.
In Table~\ref{tab:applications} we summarize the costs of the various queries described in this section. We list the work and depth of each operation on an explicit dendrogram with \dynsld, and on just an MSF of the graph.

\subsection{Dendrogram Queries}
In this section, we study queries on the SLD that provide deeper insight into the structure of the clustering. Beyond simply maintaining the dendrogram, efficient queries are crucial for analyzing and interacting with the clustering in real time. For instance, retrieving cluster memberships, lowest common ancestors (LCA), or subtree queries allows for fast anomaly detection, de-duplication, and interactive exploration of hierarchical patterns. Such queries are especially useful in the dynamic setting, where clusters evolve over time and efficient updates are essential to maintain interpretability and usability of the dendrogram. We now discuss some of these useful queries in detail. 

\myparagraph{Threshold or LCA Queries} \emph{``Given query vertices $s,t$ and query threshold $\tau$, are $s$ and $t$ in the same cluster if we agglomeratively cluster until all distances are above threshold $\tau$?''} This query essentially reduces to an LCA query on the SLD. However, as noted in \cite{tseng2022parallel}, it can be computed more directly via a path query on the input: find the maximum-weight edge on the path between $s$ and $t$. We can compare the weight of this edge with $\tau$ to answer the query. The maximum-weight edge can be obtained in $O(\log n)$ time using a standard path query in an RC tree (with augmented values) on the input, which \dynsld already maintains for connectivity queries.

\myparagraph{Cluster Size Queries} \emph{``Given a query vertex $u$ and query threshold $\tau$, what is the size of the cluster containing $u$ if we agglomeratively cluster until all distances are above threshold $\tau$?''} This query reduces to finding the size of the subtree rooted at the node $e$ with the largest weight smaller than $\tau$ on the spine of $u$ in the SLD, i.e. the path between the lowest weight edge incident to $u$ and the root. We compute this with a path query followed by a subtree query on the RC tree representing the SLD. Specifically, $e$ can be computed using the path weight search (PWS) query (see \Cref{def:pws}) in $O(\log n)$ time. The size of the subtree rooted at $e$ can be obtained in $O(\log n)$ time using a standard subtree query of an RC tree (with augmented values), which is again already maintained by \dynsld.

\myparagraph{Cluster Report Queries} \emph{``Given a query vertex $u$ and query threshold $\tau$, return the cluster containing $u$ when we agglomeratively cluster until all distances are above threshold $\tau$.''} This query corresponds to returning the entire subtree $S$ rooted at node $e$, as defined in the previous query. This is done as follows: first, using the RC tree of the SLD we identify the path decomposition for the path between the leaf $u$ (i.e., the node representing the lowest weight edge incident to $u$) and the node representing $e$. We then ``unpack'' the clusters in the path decomposition by traversing down from each cluster and collect all the base level edges, except cluster $e$. For cluster $e$, we traverse down, but ignore child clusters that are closer to the root. Return the set $S$ corresponding to the union of the endpoints of all the base level edges computed. 
Since the number of nodes visisted is proportional to the size of $S$, the total work is $O(|S|)$. Computing the path decomposition takes $O(\log n)$ depth, followed by unpacking the clusters requires traversing down the RC tree which also takes $O(\log n)$ depth, leading to an overall depth of $O(\log n)$.

We can also obtain a flat clustering---i.e., return all clusters formed when we agglomeratively cluster until all distances are above threshold $\tau$---in $O(n)$ work and $O(\log n)$ depth by a simple extension of the idea discussed above. This operation can be supported by the dynamic MSF algorithm as well. However, it is not clear whether it can support the cluster reporting operation in parallel with output-sensitive work.

\noindent\textbf{Remark:} In particular, we would like to note that any query expressible as a combination of path and/or subtree queries on the SLD and/or the input can be efficiently computed in parallel.

\begin{table}[t]
    \centering
    \begin{tabular}{|c|c c|c c|}
        \hline
        \multirow{2}{*}{Operation} & \multicolumn{2}{c|}{\dynsld} &
         \multicolumn{2}{c|}{Dynamic MSF} \\
         & Work & Depth & Work & Depth \\
         \hline\hline
         Threshold & \multirow{2}{*}{$O(\log n)$} & \multirow{2}{*}{$O(\log n)$} & \multirow{2}{*}{$O(\log n)$} & \multirow{2}{*}{$O(\log n)$} \\
         Query & & & & \\
         Cluster Report & \multirow{2}{*}{$O(\revision{|S|})$} & \multirow{2}{*}{\boldmath $O(\log n)$} & \multirow{2}{*}{$O(\revision{|S|})$} & \multirow{2}{*}{$O(\revision{|S|})$} \\
         Query & & & & \\
         Cluster Size & \multirow{2}{*}{\boldmath $O(\log n)$} & \multirow{2}{*}{\boldmath $O(\log n)$} & \multirow{2}{*}{$O(\revision{|S|})$} & \multirow{2}{*}{$O(\revision{|S|})$} \\
         Query & & & & \\
         \hline
    \end{tabular}
    \caption{\small{} The work and depth bounds for various SLD query operations using \dynsld versus using just a dynamic minimum spanning forest. \revision{Here, \boldmath$S$ is the set of points contained in the cluster of interest.}}
    \label{tab:applications}
\end{table}

\subsection{Cartesian Trees}
Given an array $A$ with $n$ elements, the \defn{cartesian tree} is a binary tree representing $A$ that is constructed as follows: 
\begin{itemize}[itemsep=0pt,parsep=0pt,leftmargin=15pt]
    \item the root of the tree is $A[i]=\min\{A[1],\ldots A[n]\}$,
    \item the left subtree is recursively computed on $A[1\ldots i-1]$ and the right subtree is  recursively computed on $A[i+1\ldots n]$.
\end{itemize}
Cartesian trees are a fundamental data structure with applications to range minimum queries (RMQ), all near small values (ANSV), suffix arrays, bottleneck edge queries (BEQ), etc.~\cite{vuillemin1980unifying, bender2000lca, demaine2014oncartesian,SB14cartesian}. 
The concept of cartesian trees can also be extended to tree inputs, which was studied by \cite{demaine2014oncartesian}.

Cartesian trees satisfy (i) heap order on the nodes, and (ii) the in-order traversal of the tree corresponds to $A$. 
In \cite{Dhulipala2024Dendrogram}, the authors observed that the cartesian tree of an array (or tree) is equivalent to the single-linkage dendrogram of the path graph (or tree) with edge weights corresponding to the array entries (or node weights), which is evident from the properties mentioned above. We leverage this relationship between the problems to design dynamic algorithms for maintaining cartesian trees under updates. 
Henceforth, we may assume max-heap order on the nodes for convenience.
To avoid confusion, we assume that the input is in tree format, where elements (and their values) correspond to edges (and their weights).
We consider two types of updates:
\begin{enumerate}[itemsep=0pt,parsep=0pt,leftmargin=15pt]
    \item \textbf{Leaf Updates or Appends:} Here, we assume that the updates only insert/delete leaves. Using our output-sensitive algorithm from \Cref{sec:outputsensitive}, we get a $O(\log n)$ time algorithm for insertions, since $c=O(1)$. For deletions, observe that the element being deleted can have at most one child in the cartesian tree, thus requiring $O(1)$ changes as well--i.e., its child must now point to its grand-parent. The child can be found in $O(\log n)$ time using a PWS query, followed by updating its parent, leading to an overall deletion time of $O(\log n)$.
    \item \textbf{Arbitrary Updates:} Insertions correspond to splitting a vertex $u$ and introducing a new edge $e=(u,u')$; this is equivalent to inserting an element at an arbitrary index in the case of array inputs. Let $v$ be some neighbor of $u$. Then, this update can be processed as: delete $e'=(u,v)$, insert $e=(u,u')$ and finally insert $e''=(u',v)$ where weight of $e''$ is the weight of $e'$. 
    Deletions correspond to contracting the edge $e=(u,v)$; this is equivalent to deleting an element at an arbitrary index in the case of array inputs. Let $w \neq v$ be some neighbor of $u$. Then, this update can be processed as: delete $e=(u,v)$, delete $e'=(u,w)$ and finally insert $e''=(v,w)$ where weight of $e''$ is the weight of $e'$. 
    The overall update times are identical to \dynsld update times.
\end{enumerate}
We note that for leaf inserts and deletes, Demaine et al.~\cite{demaine2014oncartesian} gave amortized $O(\log n)$ time algorithms, whereas we give worst case $O(\log n)$ time bounds.






\section{Related Work}

\myparagraph{Static SLD Computation}
Sequentially, an SLD can be computed using Kruskal's algorithm in $O(n\log n)$ time.
The work of Demaine et al.~\cite{demaine2014oncartesian} gave an algorithm for SLD computation that takes $O(n \log n)$ time and showed that if the cost of sorting the edges is ``free'', SLD can be solved in $O(n)$ time.
Wang et al.~\cite{wang2021fast} gave the first work-efficient parallel algorithm for SLD computation. Their algorithm is randomized and computes the SLD in $O(n \log n)$ expected work and $O(\log^2 n \log\log n)$ depth with high probability.
More recently, Dhulipala et al.~\cite{Dhulipala2024Dendrogram} gave an algorithm that takes $O(n \log h)$ work and $O(\log^2 n \log^2 h)$ depth, where $h$ is the height of the dendrogram. They further showed a lower bound proving that this work is optimal.

\myparagraph{Dynamic HAC Heuristics}
Very recently, Yu et al.~\cite{yu2025dynhac} designed a practical algorithm for solving dynamic average-linkage graph-based HAC under (batch) edge updates in the fully-dynamic setting.
Their approach is based on a recent distributed algorithm for solving average-linkage graph-based HAC~\cite{dhulipala2023terahac}.
An earlier work by Monath et al.~\cite{monath2023online} give an incremental heuristic algorithm for dynamic average-linkage graph-based HAC based on merging and unmerging clusters. Only the algorithm by Yu et al.~\cite{yu2025dynhac} achieves a provable approximation ratio for graph-based HAC in the dynamic setting, although in the worst case it can resort to full recomputation of the dendrogram. We note that earlier work by Tseng et al.~\cite{tseng2022parallel} give conditional lower bounds showing that maintaining dendrograms for average-linkage and other popular linkage functions in the fully dynamic setting requires polynomial update time.

\myparagraph{Dynamic Minimum Spanning Forest}
The most well-known algorithm for fully-dynamic minimum spanning forest is that of Holm et al.~\cite{holm2001poly} which supports updates in $O(\log^4 n)$ amortized time.
\revision{Combining this algorithm with the dynamic SLD algorithms in this paper would yield a solution to the fully-dynamic single-linkage clustering problem which processes updates in $O(\log^4 n + h \log (1+n/h))$ amortized time, where $h$ is the height of the dendrogram. As long as $h$ is $O(\log^3 n)$, the cost is dominated by the dynamic MSF algorithm, with $O(\log^4 n)$ amortized work per update.}

The first fully-dynamic batch-parallel algorithm for minimum spanning forest was given by Tseng et al.~\cite{tseng2022parallel} which processes batches of $k$ updates with $O(k \log^6 n)$ expected amortized work and $O(\log^4 n)$ depth with high probability.
\revision{Combining this algorithm with the batch-dynamic SLD algorithm in this paper, batches of updates can be processed in $O(k \log^6 n + kh \log (1+n/(kh)))$ expected amortized work, and $O(\log^4 n)$ depth with high probability.}

\myparagraph{Dynamic Cartesian Trees}
A prior study~\cite{birula2006amortized} showed how to dynamically maintain cartesian trees under insertions and ``weak'' deletions in $O(\log n)$ amortized time (weak deletions are a type of lazy deletion where nodes are marked as deleted and periodically removed from the tree).
Demaine et al.~\cite{demaine2014oncartesian} gave algorithms for leaf insertions and deletions that run in $O(\log n)$ amortized time.
Our results in this paper improve over the results of Demaine et al.~\cite{demaine2014oncartesian} by giving algorithms for leaf insertions and deletions that run in $O(\log n)$ worst case time, and our results are the first to extend to arbitrary parallel batch-dynamic insertions and deletions.
\section{Conclusion}
In this paper, we gave efficient sequential and parallel update algorithms for dynamic and batch-dynamic single-linkage dendrogram computation on a weighted input tree. All of our algorithms perform asymptotically less work than optimal algorithms for static recomputation of the dendrogram~\cite{Dhulipala2024Dendrogram}, and perform much less work in many cases.
In combination with existing dynamic minimum spanning forest algorithms~\cite{holm2001poly, tseng2022parallel}, our methods complete the theoretical picture for an end-to-end dynamic single-linkage dendrogram algorithm given an input point set or graph.

In this paper, we also gave a nearly-optimal output-sensitive edge insertion algorithm that requires work proportional to the number of structural changes to the dendrogram caused by the update. 
An open question here is whether it is possible to design similar output-sensitive algorithms for edge deletions, \revision{thereby obtaining fully-dynamic output-sensitive update algorithms. It would also be interesting to extend our techniques in the batch-dynamic setting for height-bounded updates to obtain output-sensitive updates.}
Furthermore, it would be interesting to study lower bounds for dendrogram maintenance algorithms in, for example, the cell-probe model~\cite{cellprobe} (which has been used to prove lower bounds for other dynamic graph problems~\cite{patrascu2004lower}), to prove tighter optimality of our algorithms, height-bounded and output-sensitive algorithms.

An other interesting question is whether we can develop practical implementations for dynamic single-linkage dendrogram maintenance.
Our algorithms are a good candidate for this because they have good theoretical guarantees, and our parallel algorithms are in the realistic binary fork-join model~\cite{blelloch2019optimal} which closely models modern multicore hardware.
An other interesting question for future work is to try to dynamize the static RC tree tracing (RCTT) algorithm~\cite{Dhulipala2024Dendrogram} which requires $O(n\log n)$ work but was shown to perform extremely well in practice.

\section*{Acknowledgments}
This work is supported by NSF grants CCF-2403235 and CNS-2317194. We
thank the anonymous reviewers for their helpful comments.

\bibliographystyle{ACM-Reference-Format}
\bibliography{main.bib}
\clearpage


\end{document}